\title{Better Algorithms for Individually Fair $k$-Clustering}
\author{%
  Deeparnab Chakrabarty\thanks{Supported by NSF grant \#2041920}\\
  %\thanks{Use footnote for providing further information about author (webpage, alternative address)---\emph{not} for acknowledging funding agencies.} \\
  Department of Computer Science\\
  Dartmouth College\\
  Hanover, NH 03755 \\
  \texttt{deepc@cs.dartmouth.edu} \\
  % examples of more authors
   \And
   Maryam Negahbani\\
  Department of Computer Science\\
  Dartmouth College\\
  Hanover, NH 03755 \\
  \texttt{maryam@cs.dartmouth.edu} \\
}
\newtheorem{theorem}{Theorem}
\theoremstyle{definition}
\newtheorem{definition}{Definition}
\newtheorem{fact}{Fact}
\newtheorem{lemma}{Lemma}
\newtheorem{corollary}{Corollary}
\newtheorem	{itheorem}{Result}
\crefname{Fact}{Fact}{Facts}
\newcommand{\Rset}{\mathbb{R}}
\newcommand{\opt}{\textsf{opt}\xspace}
\newcommand{\lpnorm}{{$\ell_p$-norm}\xspace}
\newcommand{\cost}{{$\big(\sum_{v \in X} d(v,S)^p\big)^{1/p}$}}
\newcommand{\ourbeta}{{2^{1+2/p}}}
\newcommand{\kcent}{\textsc{$k$-Center}\xspace}
\newcommand{\kmeans}{\textsc{$k$-Means}\xspace}
\newcommand{\kmed}{\textsc{$k$-Median}\xspace}
\newcommand{\pkmeans}{\textsc{Fair-$k$-Means}\xspace}
\newcommand{\pkmed}{\textsc{Fair-$k$-Median}\xspace}
\newcommand{\pkcent}{\textsc{Fair-$k$-Center}\xspace}
\newcommand{\ppkc}{\textsc{Fair-$(p,k)$-Clustering}\xspace}
\newcommand{\filter}{\textsf{Filter}\xspace}
\newcommand{\algoname}{\textsf{Fair-Round}\xspace}
\newcommand{\relalgoname}{\textsf{Relaxed-Fair-Round}\xspace}
\newcommand{\spaalgoname}{\textsf{Sparse-Fair-Round}\xspace}
\newcommand{\kmeanspp}{\textsf{$k$-Means$++$}\xspace}
\def\bank{{\tt bank}\xspace}
\def\census{{\tt census}\xspace}
\def\diabetes{{\tt diabetes}\xspace}
\begin{document}

\maketitle
\begin{abstract}
We study data clustering problems with $\ell_p$-norm objectives (e.g. \textsc{$k$-Median} and \textsc{$k$-Means}) in the context of individual fairness. The dataset consists of $n$ points, and we want to find $k$ centers such that (a) the objective is minimized, while (b) respecting the individual fairness constraint that every point $v$ has a center within a distance at most $r(v)$, where $r(v)$ is $v$'s distance to its $(n/k)$th nearest point. Jung, Kannan, and Lutz [FORC 2020] introduced this concept and designed a clustering algorithm with provable (approximate) fairness and objective guarantees for the $\ell_\infty$ or \textsc{$k$-Center} objective.  Mahabadi and Vakilian [ICML 2020] revisited this problem to give a local-search algorithm for all $\ell_p$-norms. Empirically, their algorithms outperform Jung et. al.'s by a large margin in terms of cost (for \textsc{$k$-Median} and \textsc{$k$-Means}), but they incur a reasonable loss in fairness. 
In this paper, our main contribution is to use Linear Programming (LP) techniques to obtain better algorithms for this problem, both in theory and in practice. We prove that by modifying known LP rounding techniques, one gets a worst-case guarantee on the objective which is much better than in MV20, and empirically, this objective is extremely close to the optimal.  Furthermore, our theoretical fairness guarantees are comparable with MV20 in theory, and empirically, we obtain noticeably fairer solutions.
Although solving the LP {\em exactly} might be prohibitive, we demonstrate that in practice, a simple sparsification technique drastically improves the run-time of our algorithm.
\end{abstract}
\section{Introduction}
As machine learning algorithms are widely used in practice for making high-stakes decisions affecting human lives, there has been a huge body of work on {\sc Fair-ML} trying to ensure `fairness' in the solutions returned by these algorithms. 
%
% such as awarding home loans~\cite{khandani2010consumer,malhotra2003evaluating} and predicting recidivism in court~\cite{propublica,dressel2018accuracy,chouldechova2017fair}, it is imperative to make sure these algorithms are \emph{fair}. 
There are two large intersecting bodies of work : one body's main focus in to understand what `fairness' means
 (e.g.~\cite{kamishima2012fairness,zemel2013learning,ZafarVGG17,yang2017measuring,CKLV17,JKL20}) in various different contexts, and the second body's focus has been on addressing the {\em algorithmic challenges} brought forth by these considerations (e.g.,~\cite{joseph2016fairness,celis2018,RS18,BCFN19,MV20,BCCN21}). 
 
%  . Among learning algorithms, clustering is a fundamental unsupervised method and there is specifically a line of research on fair clustering algorithms including~\cite{RS18,CKLV17,BCFN19,JKL20,MV20,BCCN21}. 
This paper falls in the second class. In particular, we consider an {\em individual fairness} model proposed by Jung, Kannan, and Lutz~\cite{JKL20} for a $k$-clustering problem. Given points (clients) $X$ in a space with metric distance $d$, find $k$ points (facilities) $S\subseteq X$, minimizing \cost where $d(v,S)$ is $v$'s distance to the closest point in $S$. This includes \kcent, \kmed and the popular \kmeans objective for $p=\infty$, $p=1$ and $p=2$ respectively, problems which have been extensively studied~\cite{CGST99,JV99,kanungo2002efficient,AV07,ANSW17} in the algorithms literature. Jung et. al.~\cite{JKL20} proposed that in this context a solution would be deemed individually fair, if for every client $v\in X$
there is an open facility not too far from it. More precisely, if there is a facility within distance $r(v)$ which is the smallest radius around $v$ that contains $n/k$ points. The rationale behind $n/k$ is that every facility, on average, serves $n/k$ clients.

Jung et al.~\cite{JKL20} gave a solution where every client $v$ was served within a radius of $2r(v)$, which as a jargon is called $2$-approximate fair solution. However, their solution did not explicitly consider the ``objective'' function (\kmeans/\kmed, for instance) in the clustering problem, which is often used as a proxy to measure the quality of the clustering. This was addressed in a follow up paper by 
Mahabadi and Vakilian~\cite{MV20} who gave a $(7,O(p))$-approximation with respect to the \lpnorm objective. That is, they give a local-search based solution which is $7$-approximately fair, but the objective is violated by some $Cp$-factor where the constant $C$ is rather large (for $p=1$, the factor is $84$). The theoretical running time of their algorithm is $\Tilde{O}(pk^5n^4)$.

\subsection{Our Contributions}\label{subsec:contribution}

The main contribution of our paper is to give improved algorithms for this problem using {\em linear programming rounding}. Our study stems from two observations: one, that the problem at the core of Jung et al.~\cite{JKL20} was in fact studied as ``weighted/priority \kcent problem'' by Plesn\'ik\cite{Ples87}, and that if all the $r(v)$'s were the same (which may not at all be the case), then the clustering problem has also been studied under the guise of centridian/ordered median problem~\cite{AS17,BSS18,CS18}. Combining ideas from these two bodies of work, we design an $(8,8)$ algorithm for the \pkmed problem, which obtains an $8$-approximation for both cost and fairness (our cost guarantees improve as $p$ grows). 

\begin{itheorem}
There is an $(8,\ourbeta)$-approximation algorithm for \ppkc that runs in LP solving time plus $\Tilde{O}(n^2)$, overall $\Tilde{O}(kn^4)$. In particular, we have 
$(8,8)$-approximation and $(8,4)$-approximation algorithms for \pkmed and \pkmeans, respectively.
\end{itheorem}

Although solving an LP may seem prohibitive in practice, we can obtain a much faster running time by implementing a {\em sparsification} routine (inspired by \cite{Ples87,hs85}) with a marginal hit in the fairness and clustering cost (see \Cref{lma:sparsify} for details). Empirically, this greatly decreases the running time, and is often faster than the~\cite{MV20} implementation.

In our experiments, we also find that our theoretical bounds are too pessimistic. 
Indeed, we show that our algorithm's cost is at most \%1 more than the {\em optimal} clustering cost (which does not have any fairness violation), almost always, and never more than \%15 in the rest. Furthermore, our maximum fairness violation is at most a factor of 1.27 which is much better than our theoretical guarantee of 8. We also do a more fine-grained analysis of the fairness violation : consider a vector where each coordinate stands for a clients ``unfairness'' indicating the ratio of its distance to $r(v)$. When we plot this as a histogram, we find that most of the mass is shifted to the ``left'', that is, the percentage of clients who satisfy their fairness constraints is significantly larger than in the~\cite{MV20} solution. This seems to suggest the linear program, which is trying to minimize the cost, itself tries to increase the number of fairly treated clients. We leave a theoretical investigation of this phenomenon for future work.

Our experiments also demonstrate the {\em price of fairness}. We find that our linear programs, which maintains absolute fairness, have objective value considerably larger than that of~\cite{MV20}. On the other hand, if we tune the ``fairness violation'' of the linear program to match that of~\cite{MV20}, then the objective value of our algorithm drops. 
% In practice, our output is noticeably fairer than \cite{MV20} but still less fair compared to~\cite{JKL20}. Of course ~\cite{JKL20} cost for both \kmed and \kmeans is considerably worse than our cost and~\cite{MV20}'s. In fact, even our cost seems to be worse than~\cite{MV20} by a margin. We demonstrate, that this is the price we pay for better fairness: If we allow the exact same fairness violations as~\cite{MV20} our clustering cost drops, almost matching \cite{MV20} and the optimal cost for this \emph{relaxed} \ppkc instance. 
We run experiments to further elaborate on the inherent \emph{cost of fairness} in our datasets by demonstrating how the optimal cost changes with respect to varying degrees of fairness relaxation.

It is worth noting that our algorithm works for arbitrary values of $r(v) \geq 0$ for points $v$ (and this may be true for~\cite{MV20} as well), and we present our results thus. This setting might be of interest in applications where the ``fair radius'' may not be $n/k$ but something more nuanced.

\subsection{Other related work}

\kcent has a 2-approximation due to Gonzales, and Hochbaum and Shmoys~\cite{Gon85,hs85} and they prove it is NP-hard to get better approximations. \kmed, and \kmeans are hard to approximate within factors better than 1.73 and 3.94~\cite{GK98} respectively with current best approximations being 2.67 by~\cite{BPRS14} and 9 by~\cite{ANSW17}. Also recently, there has been an improvement on lower-bounds for approximating \emph{Continuous} \kmed and \kmeans where centers can be picked anywhere in the real space. By Cohen-Addad, Karthik, and Lee~\cite{CCL21}, it is NP-hard to approximate Continuous \kmed and \kmeans within factors $2-o(1)$ and $4-o(1)$ respectively.

\pkcent is a special case of \emph{Priority} \kcent where the radii $r(v)$ in which a point $v$ demands a center at that distance, are general values. \cite{Ples87} introduced this problem and gave a best possible 2-approximation. G{\o}rtz and Wirth~\cite{GW06} study the problem for asymmetric metrics and prove that it is NP-hard to obtain any non-trivial approximation. \cite{BCCN21} give a 9-approximation for the problem in presence of \emph{outliers} and further generalize to constant approximations for general constraints on the solution centers. Another very closely related problem is \emph{Chance-$k$-Coverage} introduced in~\cite{HLPST19} in which for any point $v$, in addition to $r(v)$, a probability $p(v)$ is given and the goal is to find a distribution on possible solutions such that a solution drawn from this distribution covers $v$ with probability at least $p(v)$. This also has a 9-approximation by \cite{HLPST19}.

A clustering problem related to \pkmed and \pkmeans is the \emph{Simultaneous $k$-Clustering} in which the goal is to find a solution with approximation guarantees with respect to any monotone, symmetric norm. This problem has an $O(1)$-approximation due to Chakrabarty and Swamy~\cite{CS19} with a line of previous work including~\cite{AS17,CS18,BSS18}.

Another similar notion of individual fairness is introduced by Chen et al.~\cite{ChenFLM19} in which a solution is fair if there is no group of size at least $n/k$ for which there exists a facility that would reduce the connection cost of all members of the group if opened. \cite{ChenFLM19} give a $(1+\sqrt{2})$-approximation for $\ell_1$, $\ell_2$, and $\ell_\infty$ norm distances for the setting where facilities can be places anywhere in the real space. Micha and Shah~\cite{MS20} modified the approach to give close to 2-approximation for $\ell_2$ and proved the previous results for $\ell_1$ and $\ell_\infty$ are indeed tight.

\section{Preliminaries}\label{sec:prelim}
In this section, we formally define our problems, establish some notations, and describe a classic clustering routine due to Hochbaum and Shmoys~\cite{hs85} with modifications by Plesn\'ik\cite{Ples87}. 
Given a subset $S\subseteq X$, we use
$d(v,S)$ to denote  $v$'s distance to the closest point in $S$.

\begin{definition}[\ppkc Problem]
  The input is a metric space $(X,d)$, radius
  function $r: X \to \Rset^+$, and integers $p,k \geq 1$. The goal is to find $S \subseteq X$ of size at most $k$ such that $d(v,S) \leq r(v)$ for all $v \in X$ and the clustering cost \cost is minimized.
\end{definition}

Let $\opt$ be the clustering cost of an optimal solution. For $\alpha,\beta \geq 1$, an $(\alpha,\beta)$-approximate solution is $S \subseteq X$ of size at most $k$ with $d(v,S) \leq \alpha r(v)$ for all $v\in X$ and \cost$ \leq \beta\opt$. In plain English, the fairness approximation is $\alpha$ while the objective/cost approximation is $\beta$.
Our main result is the following.

\begin{theorem}\label{thm:main}
There is an $(8,\ourbeta)$-approximation algorithm for \ppkc that runs in LP solving time plus $\Tilde{O}(n^2)$, overall $\Tilde{O}(kn^4)$.
\end{theorem}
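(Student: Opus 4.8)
The plan is to round the natural assignment LP. First I would introduce variables $y_i\in[0,1]$ for opening facility $i\in X$ and $x_{iv}\in[0,1]$ for serving client $v$ by $i$, with constraints $\sum_i x_{iv}=1$ for every $v$, $x_{iv}\le y_i$, $\sum_i y_i\le k$, and — encoding the fairness requirement directly into the polytope — $x_{iv}=0$ whenever $d(i,v)>r(v)$; the objective is $\min\sum_v\sum_i x_{iv}\,d(v,i)^p$, whose optimum is at most $\opt^p$ (the integral optimum is feasible). Let $(x^*,y^*)$ be an optimal fractional solution and set $C_v:=\sum_i x^*_{iv}d(v,i)^p$, so that $\sum_v C_v\le\opt^p$, and because every unit of $v$'s fractional assignment sits within $r(v)$, also $C_v\le r(v)^p$.

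The engine of the rounding is a filtering radius $\rho_v:=(4C_v)^{1/p}$. Markov's inequality shows that at least a $3/4$-fraction of $v$'s assignment — hence at least $3/4$ units of $y^*$-mass — lies inside $B(v,\rho_v)$, while $\rho_v\le 4^{1/p}r(v)\le 4r(v)$ and, more strongly, the fairness constraint forces $\ge 1$ unit of $y^*$-mass inside $B(v,r(v))$. The heart of the proof is then a combinatorial step that opens an integral set $S$ with $|S|\le k$ and an open facility within $2\rho_v$ of every client $v$. Granting this, both guarantees drop out: $d(v,S)^p\le 2^p\rho_v^p=2^{p+2}C_v$, so $\sum_v d(v,S)^p\le 2^{p+2}\sum_vC_v\le 2^{p+2}\opt^p$, and taking $p$-th roots gives clustering cost $\le 2^{1+2/p}\opt$; and $d(v,S)\le 2\rho_v\le 8r(v)$ is exactly the $8$-approximate fairness bound (yielding $(8,8)$ for \pkmed and $(8,4)$ for \pkmeans).

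To produce $S$ I would combine the two classical ingredients the introduction points to. A Plesn\'ik/Hochbaum–Shmoys style clustering: sort clients by the relevant radius and greedily extract pairwise-far ``representatives'', arranged so that the balls witnessing ``$\ge 1$ unit of $y^*$-mass'' around representatives are disjoint — this is what caps the number of representatives, and ultimately opened facilities, at $k$, and it is precisely here that needing a full unit of mass (not $3/4$) matters, which is why the fairness constraint is exploited. Layered on top, an LP-rounding/bundling step in the spirit of \kmed and ordered-median roundings: around each representative form a bundle of candidate facilities of $y^*$-volume in $[3/4,1]$, pair up the low-volume bundles, and round so exactly one facility opens per (paired) bundle, close enough to its representative that the consolidation detour telescopes to $2\rho_v$ rather than the $3\rho_v$ a careless two-hop estimate gives; a non-representative client reaches its representative's facility within $\rho_v+\rho_{\phi(v)}\le 2\rho_v$ since its representative has no larger filtering radius. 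The LP has $O(kn)$ variables and $O(n^2)$ constraints, the clustering and bundling are purely combinatorial over the $n$ clients and their $O(n^2)$ pairwise distances, so the total is LP-solving time plus $\tilde O(n^2)$, i.e.\ $\tilde O(kn^4)$; and since nothing above assumed $r(v)=n/k$, arbitrary $r(v)\ge 0$ is handled verbatim.

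I expect the main obstacle to be exactly that last rounding step: certifying simultaneously that at most $k$ facilities open and that no client is pushed past $2\rho_v$. The disjoint-ball counting argument controls only the representatives arising from the fairness radii, whereas the cost bound insists that ``cheap'' clients — small $C_v$, hence tiny $\rho_v$, yet possibly huge $r(v)$ — each get a facility inside their own small $\rho_v$-ball; reconciling these (showing the bundling/pairing never forces a $(k{+}1)$-st opening while never stretching any client to $3\rho_v$) is the delicate part, and is presumably where the constant $4$ in $\rho_v=(4C_v)^{1/p}$ and the relation $\rho_v\le 4r(v)$ are tuned against each other. Finally, the sparsification routine (\Cref{lma:sparsify}) would be composed with this to replace the exact LP solve, absorbing only a controlled extra factor into the $8$ and $2^{1+2/p}$.
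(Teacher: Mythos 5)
There is a genuine gap, and it sits exactly where you flag the ``delicate part'': the combinatorial step you make the heart of the proof --- opening a set $S$ with $|S|\le k$ such that \emph{every} client $v$ has an open facility within $2\rho_v$, $\rho_v=(4C_v)^{1/p}$ --- is not merely hard to certify, it is false in general. Take $k+1$ points that are pairwise at distance $D$ and set $r(v)=D$ for all $v$ (a feasible instance). The LP can set $y_v=k/(k+1)$ for every point, so $C_v=D^p/(k+1)$ and $\rho_v=(4/(k+1))^{1/p}D$. Any solution with at most $k$ centers leaves some client at distance exactly $D$ from $S$, yet your guarantee would require $D\le 2\rho_v=2\,(4/(k+1))^{1/p}D$, which fails once $k+1>2^p\cdot 4$ (e.g.\ $p=1$, $k\ge 8$). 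The same example kills any per-client bound of the form $d(v,S)\le O\big(C_v^{1/p}\big)$ with only $k$ centers, no matter how the constants $4$ and $2$ are tuned; your disjoint-ball counting only caps the representatives coming from the \emph{fairness} radii $r(v)$ (where a full unit of $y^*$-mass is available), and there is no way to simultaneously force facilities inside the possibly much smaller $\rho_v$-balls of cheap clients. Relatedly, filtering with $\rho_v$ alone only puts $3/4$ (or, with the paper's constant, $1/2$) of a unit of mass in each representative's ball, so the representative count is bounded by $\tfrac43 k$ (resp.\ $2k$), not $k$.

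The paper's proof is structured precisely to avoid needing such a per-client cost guarantee. It filters with $R(v)=\min\{r(v),(2C_v)^{1/p}\}$, so each representative's ball carries $y^*$-mass at least $1/2$ and hence $|S|\le 2k$ (Corollary~\ref{cor:s2k}); it then consolidates $y$-mass onto representatives, rounds to a half-integral vector $y_u\in\{1/2,1\}$ via cost-decreasing transfers (Lemma~\ref{lma:halfint}), and selects at most $k$ centers by an odd/even-level choice in the forest of nearest-representative edges. Crucially, the fairness guarantee is stated per client only in terms of $r(v)$ --- two hops give $d(v,T)\le 2r(v)+6r(v)=8r(v)$ (Lemma~\ref{lma:kcentapx}) --- while the cost guarantee is purely \emph{aggregate}: moving clients to representatives, rerouting mass, and paying factor $2$ for the half-open centers gives $\sum_v d(v,T)^p\le 2^{p+2}\sum_v C_v$ (Lemma~\ref{lma:kmedapx}), i.e.\ $\ourbeta\opt$ after taking $p$-th roots. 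Your LP, the Markov step, the $8r(v)$ arithmetic, and the runtime accounting are all fine, but the proof needs this decoupling of the per-client fairness bound from an aggregate cost bound (together with the $2k\to k$ halving step); as written, your plan hinges on a structural statement that the counterexample refutes.
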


\noindent
The algorithm relies on rounding a solution to the following LP for \ppkc\footnote{The LP and its rounding is slightly different for $p=\infty$; we omit this from this version.} where the optimal \ref{lp:ppkc} objective is at most $\opt^p$. The variable $y_u$ denotes the amount by which $u \in X$ is open as a center. $x_{vu}$ for $v,u \in X$ is the amount by which $v$ is assigned to $u$. The constraints respectively capture the conditions: every client must connect to someone, $k$ centers are opened, no client can connect to an unopened center, and crucially that a client cannot travel to a center further than $r(v)$.

\begin{minipage}{0.55\textwidth}
\begin{alignat}{6}
\opt^p \geq \min &&~\sum_{v,u \in X} d(v,u)^p x_{vu}  \tag{LP}\label{lp:ppkc}\\
&& \sum_{u \in X}x_{vu} ~=~ 1, && ~~\forall v \in X &\tag{LP1}\label{lp:cover} \\
&& \sum_{u \in X} y_u ~=~ k & \tag{LP2}\label{lp:budget}
\end{alignat}
\end{minipage}
\begin{minipage}{0.45\textwidth}
\begin{alignat}{6}
&& x_{vu} ~\leq~ y_u, && ~\forall v,u \in X\tag{LP3}\label{lp:open}\\ 
&& x_{vu} ~=~  0, && ~\forall v,u : d(v,u) > r(v) \tag{LP4}\label{lp:connect}\\
&& 0 \leq x_{vu}, y_u  ~\leq~ 1, &&~~ \forall v,u \in X.\notag
\end{alignat}
\end{minipage}

% \begin{align}
% \min \sum_{v,u \in X} d(v,u)^p&x_{vu} & \tag{ LP}\label{lp:ppkc}\\
% \sum_{u \in X}x_{vu} =& 1 & \forall v \in X\tag{LP1}\label{lp:cover}\\
% \sum_{u \in X} y_u =& k & \tag{LP2}\label{lp:budget}\\
% x_{vu} \leq& y_u & \forall v,u \in X\tag{LP3}\label{lp:open}\\ 
% x_{vu} =&  0 & \forall v,u \in X: d(v,u) > r(v) \tag{LP4}\label{lp:connect}\\
% 0 \leq x_{vu}, y_u  \leq& 1 & \forall v,u \in X.\notag
% \end{align}

From here on, we use the notation $B(u,r)$ for $u \in X$ and $r \in \Rset^+$ to denote the points in a ball of radius $r$ around $u$. That is $B(u,r) := \{v \in X: d(u,v) \leq r\}$. Also, for any set of points $U \subseteq X$, let $y(U):= \sum_{u\in U}y_u$. Considering \eqref{lp:connect} and \eqref{lp:cover} the following holds.
\begin{fact}\label{fact:lpyis1}
If $y$ is from a feasible \ref{lp:ppkc} solution, then $y(B(v,r(v))) \geq 1$ for all $v\in X$.
\end{fact}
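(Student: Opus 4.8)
The plan is to chain together three of the \ref{lp:ppkc} constraints for a single fixed client $v$. First I would invoke the covering constraint \eqref{lp:cover}, which gives $\sum_{u \in X} x_{vu} = 1$, so the total assignment mass of $v$ over all potential centers is exactly one. Next I would use the radius constraint \eqref{lp:connect}: every $u$ with $d(v,u) > r(v)$ satisfies $x_{vu} = 0$, so the only terms that can contribute to the sum are those with $u \in B(v,r(v))$; hence $\sum_{u \in B(v,r(v))} x_{vu} = 1$. Finally I would apply the linking constraint \eqref{lp:open} termwise, $x_{vu} \le y_u$, to obtain $y(B(v,r(v))) = \sum_{u \in B(v,r(v))} y_u \ge \sum_{u \in B(v,r(v))} x_{vu} = 1$, which is exactly the claim.

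There is essentially no obstacle here: the statement is an immediate consequence of LP feasibility, and the only point requiring care is that $B(u,r)$ is defined with the closed inequality $d(u,v)\le r$, so that $B(v,r(v))$ is precisely the complement of the set of centers forced to zero by \eqref{lp:connect} (if it were defined with a strict inequality one would need a limiting argument, but it is not). The reason to isolate this as a named fact is that it is the fractional analogue of the integral observation ``every client has an open center within distance $r(v)$'', and it is the hook that later lets the rounding procedure guarantee each client a nearby center to be reassigned to.
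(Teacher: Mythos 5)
Your argument is correct and matches the paper's (implicit) justification: the paper states this fact as an immediate consequence of \eqref{lp:connect} and \eqref{lp:cover}, and your chain through \eqref{lp:cover}, \eqref{lp:connect}, and \eqref{lp:open} is exactly the intended reasoning, with the closed-ball observation being a fine point the paper leaves unstated.
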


We now describe a routine \filter due to \cite{Ples87,hs85} which is used as a subroutine in our main algorithm. Assume all the points are initially ``uncovered''. The routine takes a function $R: X \to \Rset^+$, sorts the points in order of {\em increasing} $R(v)$'s. Then it considers the first point in this order, calls it a ``representative'' and ``covers'' all the points $v$ at distance at most $2r(v)$ from it. Call these points $D(v)$.
Repeating this procedure until all the points are covered, forms a partition on $X$ such that the representatives are ``far apart'', while each non-representative is assigned to a ``nearby'' representative, among other useful properties listed in \Cref{fact:filter} and \Cref{fact:coreinside}.
\begin{algorithm}[ht]
	\caption{\filter}
	\label{alg:filter}
	\begin{algorithmic}[1]
		\Require Metric $(X,d)$, radius function $R: X \to \Rset^+$
		\State $U \leftarrow X$ \Comment{The set of ``uncovered points''} \label{ln:1}
		\State $S \leftarrow \emptyset$ \Comment{The set of ``representatives''}
		\While{ $U \neq \emptyset$}
	        \State $u \leftarrow \arg\min_{v\in U} R(v)$ \Comment{The first point in $U$ in non-decreasing $R$ order}  \label{ln:greedy}
	        \State $S \leftarrow S \cup u$ \label{ln:rep}
	        \State $D(u) \leftarrow \{v \in U: d(u,v) \leq 2R(v)\}$\Comment{$D(u) = B(u,2R(v))\cap U$ and includes $u$ itself} \label{ln:chld}
	        \State $U \leftarrow U \backslash D(u)$ \label{ln:remove-from-U}
		\EndWhile
		\Ensure $S$, $\{D(u) : u \in S\}$
	\end{algorithmic}
\end{algorithm}

\begin{fact}\label{fact:filter}\cite{Ples87,hs85} 
The following is true for the output of \filter: (a) $\forall u,v \in S, d(u,v) > 2\max\{R(u),R(v)\}$, (b) The balls $\{B(u,R(u)) : u \in S\}$ are mutually disjoint, (c) The set $\{D(u) : u \in S\}$ partitions $X$, (d) $\forall{u \in S},\forall{v \in D(u)}, R(u) \leq R(v)$, and (e) $\forall{u \in S},\forall{v \in D(u)}, d(u,v) \leq 2R(v)$.
% \begin{itemize}
%     \item $\forall u,v \in S, d(u,v) > 2\max\{R(u),R(v)\}$.
%     \item The balls $\{B(u,R(u)) : u \in S\}$ are mutually disjoint.
%     \item The set $\{D(u) : u \in S\}$ partitions $X$.
%     \item $\forall{u \in S},\forall{v \in D(u)}, R(u) \leq R(v)$.
%     \item $\forall{u \in S},\forall{v \in D(u)}, d(u,v) \leq 2R(v)$.
% \end{itemize}
\end{fact}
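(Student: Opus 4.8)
The plan is to first extract a single structural observation from which all five properties cascade, and then peel them off in the order (e), (d), (c), (a), (b). Index the representatives $u_1, u_2, \dots$ by the iteration of the while-loop of \filter in which each is added to $S$, and let $U_i$ denote the set of uncovered points at the start of iteration $i$. Since the algorithm only ever removes points from $U$ (line~\ref{ln:remove-from-U}), the sets are nested, $U_1 \supseteq U_2 \supseteq \cdots$, and therefore
\[
R(u_{i+1}) = \min_{v \in U_{i+1}} R(v) \ \geq\ \min_{v \in U_i} R(v) = R(u_i).
\]
This \emph{monotonicity} — the $R$-values of representatives are non-decreasing in selection order — is the workhorse behind (a).

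Properties (e), (d), and (c) I would dispatch directly from the definitions. Property (e) is immediate, since at the iteration where $u$ is chosen the set $D(u)$ is defined as $\{v \in U : d(u,v) \le 2R(v)\}$, so $d(u,v) \le 2R(v)$ holds for every $v \in D(u)$ by construction. Property (d) is equally immediate: the greedy rule on line~\ref{ln:greedy} sets $u = \arg\min_{v \in U} R(v)$, hence $R(u) \le R(v)$ for every point still uncovered at that iteration, and in particular for every $v \in D(u) \subseteq U$. For (c), note each $D(u)$ contains $u$ itself (as $d(u,u)=0 \le 2R(u)$), so $|U|$ strictly decreases each iteration and the loop terminates with $U=\emptyset$; since every removed point is placed in exactly the $D(u)$ of the iteration that removes it, and removed points never re-enter $U$, the sets $\{D(u):u\in S\}$ are pairwise disjoint and their union exhausts $X$.

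The crux is (a), and this is where the monotonicity observation earns its keep. Fix distinct $u_i, u_j \in S$ with $i < j$, so $R(u_i) \le R(u_j)$ and thus $\max\{R(u_i),R(u_j)\} = R(u_j)$. At iteration $i$ the point $u_j$ was still uncovered, since it is only selected as a representative later; so had $d(u_i,u_j) \le 2R(u_j)$ held, the definition of $D(u_i)$ would have placed $u_j \in D(u_i)$ and removed it from $U$, contradicting $u_j$ being a later representative. Hence $d(u_i,u_j) > 2R(u_j) = 2\max\{R(u_i),R(u_j)\}$, which is exactly (a).

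Finally (b) follows from (a) by the triangle inequality: if some $w$ lay in $B(u,R(u)) \cap B(v,R(v))$ for distinct $u,v \in S$, then $d(u,v) \le d(u,w)+d(w,v) \le R(u)+R(v) \le 2\max\{R(u),R(v)\} < d(u,v)$, a contradiction, so the balls are mutually disjoint. The only point demanding genuine care is the direction of the maximum in (a) — which is precisely why the monotonicity of the $R(u_i)$ must be established first — together with the observation that $u_j$ was uncovered at iteration $i$, which is what licenses the contrapositive ``it would have been covered'' argument. Everything else is bookkeeping from the definitions of $D(u)$ and the removal step.
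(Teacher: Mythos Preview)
Your proof is correct. The paper itself does not prove \Cref{fact:filter}; it states the fact and attributes it to \cite{Ples87,hs85}, so there is no in-paper argument to compare against. Your write-up supplies exactly the standard verification: properties (c), (d), (e) fall out of the definitions of the greedy selection and of $D(u)$, property (a) is the contrapositive ``$u_j$ would have landed in $D(u_i)$'' argument combined with $R(u_i)\le R(u_j)$, and (b) is the triangle-inequality corollary of (a). One minor remark: the separate monotonicity lemma on the sequence $R(u_1)\le R(u_2)\le\cdots$ is not strictly needed as an independent step, since the inequality $R(u_i)\le R(u_j)$ you actually use in (a) already follows directly from $u_j\in U_i$ and the $\arg\min$ rule; but stating it up front does no harm and makes the logic cleaner.
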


\begin{fact}\label{fact:coreinside}
For any $u \in S$ and $w \in B(u,R(u))$, the unique closest point in $S$ to $w$ is $u$.
\end{fact}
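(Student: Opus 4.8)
The plan is to obtain \Cref{fact:coreinside} directly from the separation property \Cref{fact:filter}(a) with a single application of the triangle inequality. Fix $u \in S$ and $w \in B(u,R(u))$, so that $d(u,w) \leq R(u)$. It is enough to show that every other representative $u' \in S$, $u' \neq u$, satisfies $d(w,u') > d(u,w)$: this makes $u$ the strictly closest point of $S$ to $w$, hence the unique one.

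To lower-bound $d(w,u')$ I would start from $d(u,u') \leq d(u,w) + d(w,u')$, rearranged to $d(w,u') \geq d(u,u') - d(u,w)$, and plug in \Cref{fact:filter}(a), which gives $d(u,u') > 2\max\{R(u),R(u')\} \geq 2R(u)$. Then
\[
  d(w,u') \;>\; 2R(u) - d(u,w) \;\geq\; 2R(u) - R(u) \;=\; R(u) \;\geq\; d(u,w),
\]
where the middle inequality uses $d(u,w)\leq R(u)$ and the final inequality uses it once more. This is precisely the claimed strict inequality, valid for every $u'\in S\setminus\{u\}$, so $u$ is the unique closest point of $S$ to $w$.

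I do not anticipate any real obstacle here; the only point needing (minimal) care is that the chain of inequalities stays strict, which it does because \Cref{fact:filter}(a) is itself strict. No case analysis is required — in particular one need not separately handle the possibility $w \in S$ or a degenerate radius $R(u)=0$, since the displayed argument is uniform in $u'$.
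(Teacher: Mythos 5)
Your proof is correct and uses the same ingredients as the paper's — the separation bound of \Cref{fact:filter}(a) combined with one application of the triangle inequality — just phrased as a direct strict-inequality chain rather than the paper's argument by contradiction. No gap; the strictness indeed propagates from \Cref{fact:filter}(a), so uniqueness follows as you claim.
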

\begin{proof}
Suppose otherwise. That is, there exists $v \in S$ not equal to $u$ s.t. $d(w,v) \leq d(w,u)$. Then $d(u,v) \leq d(u,w) + d(w,v) \leq 2d(w,u) \leq 2R(u)$ which contradicts \Cref{fact:filter} as we must have $d(u,v) > 2\max\{R(u),R(v)\}$.
\end{proof}

To elaborate on the importance of \Cref{alg:filter} and build some intuition, we point out the following theorem of ~\cite{Ples87,JKL20}.
\begin{theorem}\label{thm:ples}
Take $S$ the output of \Cref{alg:filter} on a \ppkc instance with $R := r$. Then if the instance is feasible, $|S| \leq k$ and $d(v,S) \leq 2r(v)$ for all $v \in X$.
\end{theorem}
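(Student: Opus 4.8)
The plan is to establish the two claims separately; both follow quickly from the properties of \filter already recorded in \Cref{fact:filter}, so this theorem is essentially a repackaging of that fact together with feasibility.

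For the radius bound $d(v,S)\le 2r(v)$, I would just trace through the algorithm. By \Cref{fact:filter}(c) the sets $\{D(u):u\in S\}$ partition $X$, so every $v\in X$ lies in $D(u)$ for a unique representative $u\in S$; and by \Cref{fact:filter}(e) (equivalently, by the definition of $D(u)$ on line~\ref{ln:chld}, which reads $d(u,v)\le 2R(v)$ with $R=r$) we get $d(u,v)\le 2r(v)$. Hence $d(v,S)\le d(v,u)\le 2r(v)$, as desired. This part uses no feasibility assumption.

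For the cardinality bound $|S|\le k$, the key is \Cref{fact:filter}(b): with $R=r$ the balls $\{B(u,r(u)):u\in S\}$ are pairwise disjoint. Now invoke feasibility: let $S^\star\subseteq X$ be a feasible solution, so $|S^\star|\le k$ and $d(v,S^\star)\le r(v)$ for every $v$. In particular, for each $u\in S$ there is a point $\sigma(u)\in S^\star$ with $d(u,\sigma(u))\le r(u)$, i.e. $\sigma(u)\in B(u,r(u))$. Since these balls are disjoint over $u\in S$, the map $\sigma:S\to S^\star$ is injective, so $|S|\le|S^\star|\le k$. (Alternatively, one can avoid naming an integral solution and argue straight from the relaxation: feasibility of \ppkc gives a feasible \ref{lp:ppkc} solution $(x,y)$, and then $k=\sum_{u\in X}y_u\ge\sum_{u\in S}y(B(u,r(u)))\ge|S|$ by \Cref{fact:lpyis1} and the disjointness of the balls.)

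I do not expect a genuine obstacle here. The one point worth flagging — and I would state it explicitly — is that both steps rely on running \filter with $R:=r$: it is this choice that makes the covering radius exactly $2r(v)$ in the first part, and that makes the disjoint balls in the second part have radii $r(u)$ that a feasible solution can be ``charged'' against (one charged center per ball). Running the same routine with a different radius function, as the main algorithm later does, is precisely why its guarantees differ.
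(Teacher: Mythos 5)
Your proof is correct and is essentially the paper's own argument: the fairness bound comes from the partition $\{D(u)\}$ and the definition of $D(u)$ with $R=r$, and the bound $|S|\le k$ comes from charging each representative to a distinct center of a feasible solution $S^\star$. The only cosmetic difference is that you phrase the injectivity via the disjoint balls of \Cref{fact:filter}(b), whereas the paper derives the same contradiction directly from \Cref{fact:filter}(a) with a triangle inequality; the content is identical.
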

\begin{proof}
By \Cref{fact:filter} we know that $\{D(u) : u \in S\}$ partitions $X$ so for any $v \in X$, there exists a $u\in S$ for which $v \in D(u)$. Plus, $d(v,u) \leq 2R(v) = 2r(v)$. Now it only remains to prove $|S| \leq k$. To see this, let $S^*$ be some feasible solution and observe that two different $u,w \in S$, cannot be covered by \emph{the same} center in $S^*$. Since otherwise, if there exists $f \in S^*$ for which $d(u,f) \leq r(u)$ and $d(w,f) \leq r(w)$ by triangle inequality $d(u,v) \leq d(u,f) + d(f,w) \leq r(u) + r(v)$ and this contradicts $d(u,w) > 2\max\{r(u),r(v)\}$ from \Cref{fact:filter}.
\end{proof}

But of course, the above theorem does not give any guarantees for the clustering cost (unless $p = \infty$). It might be the case that many points are paying close to 0 towards the clustering cost in the optimal solution, but are made to connect to a point much farther in the above procedure. 
\section{Algorithm for \ppkc problem}

Now we are ready to describe our algorithm \algoname which establishes~\Cref{thm:main}. 
At a high-level, we run our \filter routine  by defining the input function $R$ in a manner that is conscious of the \ppkc cost: given $(x^*,y^*)$ which is an optimal solution to \eqref{lp:ppkc}, for any $v \in X$ let $C_v$ be $v$'s contribution to the \ref{lp:ppkc} cost i.e. $C_v := \sum_{u \in X}d(v,u)^px^*_{vu}$ and define $R(v) := \min\{r(v), (2C_v)^{1/p}\}$. Let us ponder for a bit to see what changes from \Cref{thm:ples}. For the output $S$, we still have the fairness guarantee $d(v,S) \leq 2r(v)$ for all $v$ but since $(2C_v)^{1/p}$ might be less than $r(v)$ for any $v$, we cannot guarantee that $|S| \leq k$. Thankfully, in this case, we can still prove $|S| \leq 2k$ (\Cref{cor:s2k}). The rest of the algorithm is deciding on a subset of at most $k$ points out of this $S$ to return as the final solution, while ensuring the fairness and cost guarantees are still within constant factor of the optimal. This idea is very similar to existing ideas in~\cite{AS17,CGST99} which look at the problem without fairness considerations.

Recall that $\{D(u) : u \in S\}$ partitions $X$ and each $u \in S$ is responsible for covering all the points in $D(u)$. Here, we could simply move each point in $D(u)$ to $u$ and divert the $y$ value of each point to its closest point in $S$. Note that, $y(S) = k$ (\Cref{fact:yisk}) and similar to the proof of \Cref{thm:ples} we could show $y_u \geq 1/2$ (\Cref{lma:halftoone}) for all $u \in S$ hence $|S| \leq 2k$ (\Cref{cor:s2k}). If this leads to some $y$-value reaching $1$, we open those centers.

For $u \in S$ with $y_u < 1$, if we do not decide to include it in the final solution, we promise to open $S_u$, its closest point in $S$ other than itself. In this case, all the $|D(u)|$ points on $u$ are delegated to $S_u$. Using the fact that $y_u < 1$, we can prove that fairness guarantee (\Cref{lma:kcentapx}) approximately holds for the points in $D(u)$ even after this delegation.

To get the clustering cost guarantee, we need to do more work. Observe that, currently, $u$ is already fractionally assigned to $v$ by $1-y_u$. So if instead of $y_u \geq 1/2$ we had $y_u = 1/2$ we could ensure that already $u$ is assigned to $v$ by $1/2$ thus integrally assigning $u$ to $v$ only doubles the cost. This is why we need to do more work to get $y_u \in \{1/2,1\}$ for $u \in S$ (see \Cref{lma:halfint}) and then bound the clustering cost in \Cref{lma:kmedapx}. 

\begin{algorithm}[!ht]
	\caption{\algoname: \ppkc bi-criteria approximation}
	\label{alg:main}
	\begin{algorithmic}[1]
		\Require Metric $(X,d)$, radius function $r: X \to \Rset^+$, and $(x^*,y^*)$ an optimal solution of \ref{lp:ppkc}
		\State $C_v \leftarrow \sum_{u \in X}d(v,u)^px^*_{vu}\quad \forall v \in X$\Comment{$v$'s cost share in the \ref{lp:ppkc} objective}
		\State $R(v) \leftarrow \min\{r(v), (2C_v)^{1/p}\}\quad \forall v \in X$\label{line:ourR}
		\State $S, \{D(u) : u \in S\} \leftarrow \filter((X,d),R)$
		\If{$|S| \leq k$}
		    \State \Return $S$
		\EndIf
		\State $(x,y) \leftarrow (x^*,y^*)$
% 		\For{all $u \in S$}\Comment{For all $u \in S$, direct $y$ mass from $B(u,R(u))$ to $u$}
% 		    \State $y_u \leftarrow y_u + y(B(u,R(u)))$ \Comment{Note: May cause $y_u$ to increase above 1}
% 		    \State $y_v \leftarrow 0 \quad \forall v \in B(u,R(u))$
% 		\EndFor\label{ln:move-core}
		\For{all $v \in X\backslash S$}\Comment{Direct $y$ mass from outside of $S$ to the closest point in $S$}
		    \State $u \leftarrow $ closest point in $S$ to $v$
		    \State $y_u \leftarrow y_u + y_v$, $y_v \leftarrow 0$ \Comment{Note: May cause $y_u$ to increase above 1}
		\EndFor\label{ln:move-rest}
		\While{There are $u,v \in S$ with $y_u > 1$ and $y_v < 1$}\Comment{Ensure $y_u \leq 1$ for all $u \in S$}
	        \State $y_u \leftarrow y_u - \delta$, $y_v \leftarrow y_v + \delta$, where $\delta := \min\{1-y_v, y_u-1\}$
	    \EndWhile\label{ln:smudge} \Comment{Remark: By now, $ 1/2 \leq y_u \leq 1$ for all $u \in S$ (see \Cref{lma:halftoone}).}
	    \State $S_u \leftarrow $ closest point in $S\backslash u$ to $u$  $\quad \forall u \in S$
	    \While{There are $u,v \in S$ with \\\qquad$1/2 < y_u < 1$, and $y_v < 1$,\\\qquad$d(u,S_u)^p|D(u)| > d(v,S_v)^p|D(v)|$} \Comment{Move $y$ mass from $u$ to $v$ if $u$ is costlier }
	        \State $y_u \leftarrow y_u - \delta$, $y_v \leftarrow y_v + \delta$, where $\delta := \min\{1-y_v, y_u-1/2\}$
	    \EndWhile\label{ln:halfint}\Comment{Remark: At this point, $y_u \in \{1/2,1\}$ for all $u \in S$ (see \Cref{lma:halfint}).}
	    \State $T \leftarrow u \in S$ with $y_u = 1$
	    \State Consider the forest of arbitrary rooted trees on vertices $u \in S$ with edges $(u,S_u)$. Let $O$ be odd-level vertices in $S\backslash T$, and $E$ be even-level vertices in $S\backslash T$.
	    \If{$|E| \leq |O|$} 
	        \State $T \leftarrow T \cup E$
	    \Else
	        \State $T \leftarrow T \cup O$
	   \EndIf
		\Ensure $T$
	\end{algorithmic}
\end{algorithm}

\begin{fact}\label{fact:yisk}
$y(S) = k$ and remains so after \Cref{ln:move-rest}.
\end{fact}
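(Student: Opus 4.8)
The plan is to treat this as an invariant-tracking argument: keep an eye on the total open mass $\sum_{u\in X}y_u$ and, separately, show that the for-loop on \Cref{ln:move-rest} empties all $y$-mass out of $X\setminus S$. First I would observe that the algorithm initializes $(x,y)=(x^*,y^*)$ for an optimal \ref{lp:ppkc} solution, so by \eqref{lp:budget} we have $\sum_{u\in X}y_u = k$ at the moment the for-loop begins.

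For the loop itself, the two things to check are that it is mass-conservative and that mass it removes from a point of $X\setminus S$ is never restored. Each iteration, for a point $v\in X\setminus S$, picks the closest point $u\in S$ to $v$ and performs $y_u\leftarrow y_u+y_v$, $y_v\leftarrow 0$; since $v\notin S$ while $u\in S$, we have $u\neq v$, so this update leaves $\sum_{u\in X}y_u$ unchanged. Moreover, once $v$ has been processed, $y_v=0$, and $y_v$ is never increased again, because every update of the loop adds mass only to a point of $S$, and $v\notin S$. Hence the order in which the points of $X\setminus S$ are processed is irrelevant, and after \Cref{ln:move-rest} we have $y_v=0$ for all $v\in X\setminus S$ together with $\sum_{u\in X}y_u=k$, which gives $y(S)=\sum_{u\in S}y_u=\sum_{u\in X}y_u=k$.

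To finish the ``remains so'' part, I would note that the only further modifications to $y$ are the while-loops on \Cref{ln:smudge} and \Cref{ln:halfint}, each of which performs $y_u\leftarrow y_u-\delta$, $y_v\leftarrow y_v+\delta$ for two points $u,v\in S$; this moves mass strictly within $S$, so both $\sum_{u\in S}y_u=k$ and the property $y_v=0$ for $v\notin S$ persist. I do not expect a genuine obstacle here: the statement is a conservation invariant, and the single point that needs a line of care is verifying in the for-loop that the source index $v$ and target index $u$ are distinct (which holds since $v\notin S\ni u$), so that the transfer neither changes the total nor can be undone by a later iteration.
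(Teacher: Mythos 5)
Your argument is correct and is exactly the conservation argument the paper leaves implicit (the fact is stated without proof): the total mass is $k$ by \eqref{lp:budget}, each transfer in the loop ending at \Cref{ln:move-rest} moves mass from a point of $X\setminus S$ to a distinct point of $S$, so afterwards all mass sits on $S$, and the later while-loops only shift mass within $S$. Nothing is missing.
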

%deepc: I think this can be removed
% \begin{proof}
% We initialize $y$ to $y^*$ for which $y^*(X) = k$ by \ref{lp:budget}. The algorithm then transfers $y(X)$ to $S$ by \Cref{ln:move-rest} without changing the total. Afterwards, the total value of $y(S)$ is unchanged.
% \end{proof}
\begin{lemma}\label{lma:halftoone}
After \Cref{ln:smudge} of \Cref{alg:main}, $ 1/2 \leq y_u \leq 1$ for all $u \in S$.
\end{lemma}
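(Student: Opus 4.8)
The plan is to establish the two inequalities separately. The upper bound $y_u \leq 1$ is essentially by construction: after the loop on Line \ref{ln:smudge}, if any $u \in S$ still had $y_u > 1$, then since $y(S) = k = |S \cap \{\text{all}\}|$... more precisely, since $y(S) = k$ (Fact \ref{fact:yisk}) and $|S| > k$ (we are in the case $|S| > k$), the average of $y_u$ over $S$ is $k/|S| < 1$, so there must exist some $v \in S$ with $y_v < 1$; then the while-loop condition is still met and the loop would not have terminated. Hence at termination $y_u \leq 1$ for all $u \in S$. (One should also note the loop terminates: each iteration either zeroes out the surplus of some $u$ or fills up some $v$ to exactly $1$, and the $\delta$-update preserves $y(S) = k$, so only finitely many iterations occur.)

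The main work is the lower bound $y_u \geq 1/2$. First I would argue this holds \emph{before} Line \ref{ln:smudge}, i.e. right after the redistribution loop on Line \ref{ln:move-rest}: for each $u \in S$, by Fact \ref{fact:coreinside} every point $w \in B(u, R(u))$ has $u$ as its unique closest point in $S$, so in the loop of Line \ref{ln:move-rest} all the $y^*$-mass sitting on $B(u,R(u)) \setminus S$ gets moved onto $u$, plus $u$ keeps its own $y^*_u$. Therefore after Line \ref{ln:move-rest}, $y_u \geq y^*(B(u,R(u)))$. Now I want to lower bound $y^*(B(u,R(u)))$ by $1/2$. This is where the definition $R(v) = \min\{r(v), (2C_v)^{1/p}\}$ enters. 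If $R(u) = r(u)$, then $y^*(B(u,r(u))) \geq 1$ by Fact \ref{fact:lpyis1}, done. If $R(u) = (2C_u)^{1/p} < r(u)$, I would use a Markov-type argument on $u$'s fractional assignment: since $\sum_{t \in X} x^*_{ut} = 1$ (constraint \eqref{lp:cover}) and $\sum_t d(u,t)^p x^*_{ut} = C_u$, the total $x^*$-mass that $u$ sends to points at distance $> R(u) = (2C_u)^{1/p}$ is at most $C_u / (R(u)^p) = C_u/(2C_u) = 1/2$; hence $u$ sends at least $1/2$ of its mass to points \emph{inside} $B(u,R(u))$, and by \eqref{lp:open} $x^*_{ut} \leq y^*_t$, so $y^*(B(u,R(u))) \geq \sum_{t \in B(u,R(u))} x^*_{ut} \geq 1/2$.

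Finally I would check that the lower bound $y_u \geq 1/2$ is \emph{preserved} through the while-loop on Line \ref{ln:smudge}. In that loop, mass only ever moves \emph{from} a vertex with $y_u > 1$ \emph{to} a vertex with $y_v < 1$, and the amount moved is $\delta = \min\{1 - y_v, y_u - 1\}$, so the donor's value drops from $y_u$ to $\max\{1, \text{something} \geq 1\} \geq 1 > 1/2$ and the receiver's value only increases; neither operation can push any coordinate below $1/2$. Combining the three parts gives $1/2 \leq y_u \leq 1$ for all $u \in S$ at the end of Line \ref{ln:smudge}. The only slightly delicate point is the Markov argument and making sure the balls $B(u,R(u))$ used in the Line \ref{ln:move-rest} bound are genuinely the ones whose mass lands on $u$ — which is exactly guaranteed by Fact \ref{fact:coreinside} since $w \in B(u,R(u))$ implies $R(u) \leq R(w)$ (Fact \ref{fact:filter}(d) applies once $w \in D(u)$, but here I directly use that $B(u,R(u))$'s closest $S$-point is $u$).
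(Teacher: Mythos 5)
Your proof is correct and follows essentially the same route as the paper's: the same case analysis on $R(u)$ (using \Cref{fact:lpyis1} when $R(u)=r(u)$ and the Markov-type bound with \eqref{lp:cover} and \eqref{lp:open} when $R(u)=(2C_u)^{1/p}$, with the mass transfer to $u$ justified by \Cref{fact:coreinside}), and the same counting argument ($y(S)=k$ with $|S|>k$ from \Cref{fact:yisk}) for the upper bound after the loop at \Cref{ln:smudge}. You are in fact a bit more explicit than the paper about why the lower bound $1/2$ survives the redistribution loop and why the loop terminates, which is a welcome addition but not a different argument.
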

\begin{proof}
First we argue that $y_u \geq 1/2$ for all $u \in S$ by the end of \Cref{ln:move-rest}. Fix $u \in S$. Per \Cref{fact:coreinside} $y(B(u,R(u)))$ is entirely moved to $y_u$. By definition of $R(u)$ there are two cases: Case I, $R(u) = r(u)$ thus $y_u \geq 1$ as $y^*(B(v,r(v))) \geq 1$ for all $v \in X$ according to \Cref{fact:lpyis1}. Case II, $R(u) = (2C_u)^{1/p}$ then by Markov's inequality $y^*(B(u,R(u))) \geq 1/2$ and after this point, $y_u$ is never decreased to below $1/2$. 
More precisely, 
%Let us elaborate on the Markov inequality.
%\begin{equation*}
 $  C_u = \sum_{v \in X}d(u,v)^px^*_{uv} \geq \sum_{\substack{v \in X:\\ d(u,v)>R(u)}}d(u,v)^px^*_{uv} \geq 2C_u\sum_{\substack{v \in X:\\ d(u,v)>R(u)}}x^*_{uv}$.
%\end{equation*}
Considering $\sum_{v \in X}x^*_{uv} = 1$ by \ref{lp:cover}, this implies $\sum_{\substack{v \in X:\\ d(u,v)\leq R(u)}}x^*_{uv} \geq 1/2$ thus $y^*(B(u,R(u))) \geq 1/2$.

As for proving $y_u \leq 1$, it might indeed be the case that $y_u > 1$ by the end of \Cref{ln:move-rest} but the loop ending at \Cref{ln:smudge} can guarantee $y_u \leq 1$ for all $u \in S$. This is because $y(S) = k$ (\Cref{fact:yisk}) and we already checked in the beginning of \Cref{alg:main} that $|S| > k$.
\end{proof}
\begin{corollary}\label{cor:s2k}
 $S$ in \Cref{alg:main} has size at most $2k$.
\end{corollary}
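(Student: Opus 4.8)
The plan is to combine the lower bound on each $y_u$ with the conservation of total $y$-mass. First I would recall from \Cref{fact:yisk} that $y(S) = k$ right after \Cref{ln:move-rest} of \Cref{alg:main}, and observe that the loop ending at \Cref{ln:smudge} only redistributes $y$-mass among points of $S$ (it decreases $y_u$ by $\delta$ and increases $y_v$ by $\delta$ for $u,v \in S$), so $y(S) = k$ continues to hold after \Cref{ln:smudge} as well.

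Next I would invoke \Cref{lma:halftoone}, which guarantees that after \Cref{ln:smudge} we have $y_u \geq 1/2$ for every $u \in S$. Summing this bound over all $u \in S$ gives
\[
k \;=\; y(S) \;=\; \sum_{u \in S} y_u \;\geq\; \frac{|S|}{2},
\]
and rearranging yields $|S| \leq 2k$. This is exactly the claimed bound.

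There is no real obstacle here: the corollary is an immediate arithmetic consequence of \Cref{lma:halftoone} (each surviving representative carries at least half a unit of opening mass) together with the budget constraint $y(S) = k$ inherited from \eqref{lp:budget} via \Cref{fact:yisk}. The only point requiring a moment's care is to confirm that the reassignment loop at \Cref{ln:smudge} preserves the total $y$-mass on $S$ — but this is clear since every update transfers mass between two members of $S$ without creating or destroying any. Hence the statement follows directly, and $S$ can be used as a size-$\le 2k$ working set from which the algorithm will later select the final $\le k$ centers.
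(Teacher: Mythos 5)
Your proof is correct and matches the paper's (implicit) argument exactly: the paper also derives $|S| \leq 2k$ from $y(S) = k$ (\Cref{fact:yisk}) combined with $y_u \geq 1/2$ for all $u \in S$ (\Cref{lma:halftoone}). Your extra remark that the loop at \Cref{ln:smudge} only redistributes mass within $S$ is a reasonable sanity check, consistent with \Cref{fact:yisk}.
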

% \begin{proof}
% $y(S) = k$ by \Cref{fact:yisk} and $y_u \geq 1/2$ for all $u \in S$ thus $|S| \leq 2k$.
% \end{proof}

\begin{lemma}\label{lma:halfint}
After \Cref{ln:halfint} of \Cref{alg:main}, $y_u \in \{1/2,1\}$ for all $u \in S$.
\end{lemma}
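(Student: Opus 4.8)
The plan is to first fix the two invariants in play, then prove that the loop ending at \Cref{ln:halfint} terminates (this is the crux), and finally read the claimed integrality off the stopping condition. Throughout, write $c_u := d(u,S_u)^p|D(u)|$ for $u\in S$; since $|S|>k\ge 1$ the point $S_u$ exists, and $c_u>0$ because $u,S_u$ are distinct points of the metric and $u\in D(u)$. The first invariant is $\sum_{u\in S} y_u = k$: it holds after \Cref{ln:move-rest} by \Cref{fact:yisk}, and every transfer $y_u\leftarrow y_u-\delta$, $y_v\leftarrow y_v+\delta$ in the loops at \Cref{ln:smudge} and \Cref{ln:halfint} preserves it. The second is $y_u\in[1/2,1]$ for all $u\in S$ throughout the loop at \Cref{ln:halfint}: it holds at the start by \Cref{lma:halftoone}, and each update keeps $y_u\ge 1/2$ (since $\delta\le y_u-1/2$) and $y_v\le 1$ (since $\delta\le 1-y_v$). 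In particular it will suffice to rule out $y_u\in(1/2,1)$ at termination.

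For termination I would track the potential $\Psi:=\sum_{u\in S:\, y_u\notin\{1/2,1\}} c_u$, which takes at most $2^{|S|}$ values (it is a sub-sum of the fixed set $\{c_u\}$), and show it strictly decreases each iteration. In an iteration with donor $u$ (so $1/2<y_u<1$, and $c_u$ is currently counted in $\Psi$) and acceptor $v$ (so $y_v<1$), note $\delta>0$. If $y_v\in(1/2,1)$ too, then only $y_u,y_v$ change and, because $\delta=\min\{1-y_v,\,y_u-1/2\}$, at least one of them lands exactly on $1/2$ or $1$ while neither leaves $[1/2,1]$; so $\Psi$ sheds at least one of the positive terms $c_u,c_v$ and gains nothing. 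If instead $y_v=1/2$, then $\delta=\min\{1/2,\,y_u-1/2\}=y_u-1/2$ (as $y_u<1$), so after the step $y_u=1/2$ and $y_v=y_u^{\mathrm{old}}\in(1/2,1)$; hence $c_u$ leaves $\Psi$ and $c_v$ enters, and $\Psi$ changes by $c_v-c_u<0$ by the loop's condition $c_u>c_v$. Either way $\Psi$ strictly decreases, so the loop halts.

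At termination no pair $u,v\in S$ satisfies $1/2<y_u<1$, $y_v<1$, and $c_u>c_v$. If some $u^\star$ had $1/2<y_{u^\star}<1$, then taking $u^\star$ as donor against every $v$ with $y_v<1$ gives $c_{u^\star}\le c_v$ for all such $v$, and swapping the roles of any two fractional centers shows all centers with $y$-value in $(1/2,1)$ have one common cost $c^\star=\min\{c_v:y_v<1\}$. Breaking cost ties by a fixed order on $S$ (equivalently, assuming the $c_u$ are distinct by an infinitesimal perturbation, which changes nothing else) then forces a \emph{unique} such center $u^\star$, say with $y_{u^\star}=\theta\in(1/2,1)$. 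Writing $a=|\{u:y_u=1\}|$ and $b=|\{u:y_u=1/2\}|$, we have $a+b+1=|S|$ and, by the mass invariant $\sum_{u\in S}y_u=k$, also $a+\tfrac b2+\theta=k$; eliminating $b$ yields $\theta=\tfrac12(2k-a-|S|+1)$, a multiple of $1/2$, which is impossible for $\theta\in(1/2,1)$. So no coordinate of $y$ lies strictly inside $(1/2,1)$, and with the $[1/2,1]$ invariant we conclude $y_u\in\{1/2,1\}$ for all $u\in S$.

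The main obstacle is the termination argument: since the updates are continuous and $\delta$ can be arbitrarily small, a naive progress measure fails, and one needs precisely a potential like $\Psi$ that exploits the fact that every iteration either snaps a coordinate onto $\{1/2,1\}$ or trades a costlier off-grid coordinate for a strictly cheaper one. The tie-breaking caveat is the only subtle point in the final step, and is a standard genericity fix.
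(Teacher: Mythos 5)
Your proof is correct, and it is both more careful and structurally a bit different from the paper's. The paper's own argument is two lines: since $y(S)=k$ is an integer (\Cref{fact:yisk}) and $y_u\in[1/2,1]$ (\Cref{lma:halftoone}), a bad terminal state would contain at least \emph{two} coordinates strictly inside $(1/2,1)$, and the paper simply asserts that the while loop would have acted on such a pair, so this cannot be the state at \Cref{ln:halfint}; termination of the loop is not discussed at all. You instead (i) prove termination explicitly via the potential $\Psi$ (a genuinely needed step, since the naive count of fractional coordinates does not decrease when the acceptor sits at $1/2$, exactly the case your cost-decrease argument handles), and (ii) close the argument with a uniqueness-plus-parity computation rather than the paper's ``two fractional coordinates, so the loop fires'' shortcut. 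Your tie-breaking caveat is not merely cosmetic: with the loop condition as written (strict inequality $d(u,S_u)^p|D(u)|>d(v,S_v)^p|D(v)|$), exact cost ties allow the loop to halt with two coordinates strictly inside $(1/2,1)$ (e.g. $y=(1,3/4,3/4,1/2)$ with all costs equal and $k=3$), so the lemma as literally stated needs either a consistent tie-break or a perturbation; the paper's proof silently assumes the pair would be processed. Note that breaking ties by a fixed order on $S$ (rather than ``swap whenever costs are equal'') is the right fix, as an unordered $\geq$ condition could ping-pong mass forever; your perturbation phrasing achieves the same. In short: same underlying mass-is-integer idea as the paper, plus the two pieces of rigor (termination and ties) that the paper leaves implicit.
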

\begin{proof}
Suppose not. Since $y(S) = k$ (\Cref{fact:yisk}) and $k$ is an integer, by \Cref{lma:halftoone}, there has to be at least two $u,v \in S$ with $y_u,y_v \in (1/2,1)$ which is a contradiction, since either one of them has to be changed to $1/2$ or 1 in the while loop ending at \Cref{ln:halfint}.
\end{proof}

\begin{lemma}\label{lma:kcentapx}
For all $v \in X$, $d(v,T) \leq 8r(v)$.
\end{lemma}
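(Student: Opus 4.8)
The plan is to fix a client $v\in X$ and let $u\in S$ be the representative with $v\in D(u)$ (it exists and is unique since $\{D(w):w\in S\}$ partitions $X$). By \Cref{fact:filter}(e) together with $R\le r$ we have $d(v,u)\le 2R(v)\le 2r(v)$, so if $u\in T$ we are already done with $d(v,T)\le 2r(v)$. Hence assume $u\notin T$; in that case \Cref{alg:main} serves the clients of $D(u)$ from $S_u$, and it suffices to establish (i) $S_u\in T$ and (ii) $d(u,S_u)\le 6r(v)$, since then $d(v,T)\le d(v,S_u)\le d(v,u)+d(u,S_u)\le 8r(v)$.

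Part (i) is combinatorial and comes from the last lines of \Cref{alg:main}. By \Cref{lma:halfint} every $w\in S$ has $y_w\in\{1/2,1\}$, and $y_w=1$ forces $w\in T$; so $u\notin T$ means $y_u=1/2$ and $u$ sits in the larger of the two level-parity classes $E,O$ of the rooted forest with edge set $\{(w,S_w)\}$, i.e.\ in exactly the class that is \emph{not} added to $T$. The edge $(u,S_u)$ of the forest joins $u$ to a vertex of the opposite level-parity, so $S_u$ lies either in the class that was added to $T$ or has $y_{S_u}=1$; either way $S_u\in T$. (The handling of cycles of the map $w\mapsto S_w$ when rooting the trees is routine and I will spell it out.)

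Part (ii) is the heart of the matter, and I will get it from a counting argument inside the ball $B(v,4r(v))$. Work with the $y$-values immediately after \Cref{ln:move-rest}. Every $w\in B(v,r(v))$ sends its $y^*$-mass, at that stage, to a point of $S$ within distance $4r(v)$ of $v$: if $w\in S$ this point is $w$ itself (and $r(v)\le 4r(v)$), and otherwise it is $w$'s closest point $\phi(w)\in S$, for which $d(w,\phi(w))\le d(w,u)\le d(w,v)+d(v,u)\le 3r(v)$, hence $d(v,\phi(w))\le 4r(v)$. Since $y^*(B(v,r(v)))\ge 1$ by \Cref{fact:lpyis1}, the points of $S\cap B(v,4r(v))$ together carry $y$-mass at least $1$ right after \Cref{ln:move-rest}. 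Now suppose $S\cap B(v,4r(v))=\{u\}$; then $y_u\ge 1$ at that moment, but a vertex whose value is $\ge 1$ after \Cref{ln:move-rest} stays $\ge 1$ through the loop at \Cref{ln:smudge} (which drains only vertices strictly above $1$, and only down to $1$) and is untouched by the loop at \Cref{ln:halfint} (which moves mass only among vertices strictly between $1/2$ and $1$), so it ends at exactly $1$ and $u\in T$, contradicting $u\notin T$. Therefore some $u'\in S\cap B(v,4r(v))$ has $u'\ne u$, and then $d(u,S_u)\le d(u,u')\le d(u,v)+d(v,u')\le 2r(v)+4r(v)=6r(v)$, which is (ii).

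The step I expect to be most delicate is the mass-localisation claim in part (ii): accounting for the $y^*$-mass landing on each vertex of $S$ without double counting, and verifying that a value which is $\ge 1$ just after \Cref{ln:move-rest} is preserved (at exactly $1$) through both re-balancing loops --- this is precisely what converts ``only one point of $S$ near $v$'' into ``$u\in T$''. The forest/parity bookkeeping of part (i) is the easy part once the cycle-breaking convention is pinned down.
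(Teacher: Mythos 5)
Your proof is correct and follows essentially the same route as the paper's: reduce to showing $S_u\in T$ and $d(u,S_u)\le 6r(v)$ when $u\notin T$, and obtain the latter by the mass-accounting argument that if all of the unit $y^*$-mass of $B(v,r(v))$ landed on $u$ then $y_u$ would equal $1$ at the end and $u$ would be in $T$, so some other center $u'\in S$ must lie nearby, after which the triangle inequality gives $6r(v)$ and hence $8r(v)$. Your version merely spells out two steps the paper asserts (the preservation of $y_u\ge 1$ through the two rebalancing loops, and the parity argument for $S_u\in T$) and routes the triangle inequality through $d(v,u')\le 4r(v)$ rather than $d(u,u')\le 2d(u,w)$, yielding the same bound.
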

\begin{proof}
Fix $v \in X$. Since $\{D(u) : u \in S\}$ partitions $X$ there exists $u \in S$ such that $v \in D(u)$. According to \Cref{fact:filter} $d(v,u) \leq 2R(v) \leq 2r(v)$ by definition of $R$. If $u$ ends up in $T$ we are done. Else, it has to be that $y_u < 1$ and $S_u \in T$. In what follows, we prove that if $y_u < 1$ at \Cref{ln:halfint} then $d(u,S_u) \leq 6r(v)$. This implies $d(v,S_u) \leq 8r(v)$ hence the lemma.

We know that initially $y^*(B(v,r(v))) \geq 1$ per \Cref{fact:lpyis1} but since $v \notin S$, the $y$ mass on $B(v,r(v))$ has been moved to $S$ by \Cref{ln:move-rest}. If for all $w \in B(v,r(v))$ their closest point in $S$ was $u$, all of $y(B(v,r(v)))$ would be moved to $u$ then $y_u = 1$ by the end of \Cref{ln:halfint}. So there must exist $w \in B(v,r(v))$ with $x^*_{vw} > 0$ along with $u' \in S$, $u' \neq u$, such that $d(w,u') \leq d(w,u)$ which made $y_w$ to be moved to $u'$. By definition of $S_u$, $d(u,S_u) \leq d(u,u')$. Applying the triangle inequality twice gives:
\begin{alignat}{6}
    d(u,S_u) &&~\leq~& d(u,u') \leq d(u,w) + d(w,u') &&~\leq~&& 2d(u,w) \notag \\
    &&~\leq~& 2(d(u,v) + d(v,w)) &&~\leq~&& 2(2r(v) + r(v)) = 6r(v)\notag
\end{alignat}
where the last inequality comes from \Cref{fact:filter} stating $d(u,v) \leq 2r(v)$, and from the fact that $x^*_{vw} > 0$ implies $d(v,w) \leq r(v)$ by \ref{lp:connect}.
%this para is probably not helpful
% Note that, one can also prove $d(u,S_u) \leq 2r(u)$ by similar analysis, looking at a $w$ for which $x^*_{uw} >0$ and $y_w$ was moved to $u' \in S$ other than $u$. That is $d(u,S_u) \leq d(u,u') \leq d(u,w) + d(w,u') \leq 2d(u,w) \leq 2r(u)$. If it was true that $r(v) \leq r(u)$ (in particular for uniform $r$) this would imply $d(v,T) \leq 4r(v)$. This is precisely how~\cite{AS17} get their fairness guarantee of 4 for uniform $r$.
\end{proof}
\begin{lemma}\label{lma:kmedapx}
$\big(\sum_{v \in X} d(v,T)^p\big)^{1/p} \leq \big((2^{p+2})\sum_{v,u \in X}C_v\big)^{1/p} \leq \ourbeta\opt$.
\end{lemma}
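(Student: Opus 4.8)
The plan is to charge the final cost $\sum_{v \in X} d(v,T)^p$ against $\sum_{v \in X} C_v$ in a vertex-by-vertex fashion, mirroring the structure of the proof of \Cref{lma:kcentapx} but keeping track of $p$-th powers rather than worst-case radii. Fix $v \in X$ and let $u \in S$ be the unique representative with $v \in D(u)$. There are two cases. If $u \in T$, then $d(v,T) \le d(v,u) \le 2R(v) \le 2(2C_v)^{1/p}$ by the definition of $R$ and \Cref{fact:filter}(e), so $d(v,T)^p \le 2^{p+1} C_v$. If $u \notin T$, then (as argued in \Cref{lma:kcentapx}) $y_u < 1$ at \Cref{ln:halfint}, so $S_u \in T$ and $d(v,T) \le d(v,u) + d(u,S_u)$. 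The first term is again at most $2R(v) \le 2(2C_v)^{1/p}$. For the second term I would not use the crude $6r(v)$ bound from \Cref{lma:kcentapx}; instead I would exploit the second while loop (ending at \Cref{ln:halfint}): since $y_u < 1$ there and $y(S)=k$ is an integer, there must be some $w \in S$ with $y_w = 1$ (i.e. $w \in T$) such that $d(u,S_u)^p |D(u)| \le d(w,S_w)^p |D(w)|$ — otherwise the loop would have moved mass away from $u$. This is the key step: it lets me bound the expensive quantity $d(u,S_u)^p|D(u)|$ by something that is itself ``already paid for''.

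Next I would bound $\sum_{u \in S \setminus T} d(u,S_u)^p |D(u)|$. Using the inequality from the while loop, each $u \in S \setminus T$ with $y_u < 1$ can be charged to a distinct... actually, since $T$ after the tree step contains at least half of $S \setminus \{u : y_u = 1\}$ via the odd/even level argument, every $u \notin T$ has its tree-parent or tree-child $S_u$ in $T$, and I can bound $d(u,S_u)^p|D(u)|$ using that $S_u$'s cost quantity dominates. But the cleanest route is: for each $u \in S$, relate $d(u, S_u)^p$ to the LP cost of points in $B(u, R(u))$. Indeed, from the proof of \Cref{lma:kcentapx}, whenever $y_u < 1$ there is $w \in B(v, r(v))$ with $x^*_{vw} > 0$ whose nearest point in $S$ is some $u' \ne u$; more usefully, since $y^*(B(u,R(u))) \ge 1/2$ but the mass landing on $u$ is $< 1$, a constant fraction of the $B(u,R(u))$ mass is diverted elsewhere, and each such diverted point $z$ has $d(z, S \setminus u) \le d(z,u) \le$ something, forcing $C_z$ to be large (roughly $d(u,S_u)^p$ up to constants). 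Summing $C_z$ over the points in $D(u)$ — which all lie within $2R(v)$ of $u$ — and over all $u \in S$, and using that $\{D(u)\}$ partitions $X$, I get $\sum_{u \in S \setminus T} d(u,S_u)^p|D(u)| \le c \sum_{z \in X} C_z$ for an absolute constant $c$.

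Finally I would combine the two cases via the standard power-mean inequality $(a+b)^p \le 2^{p-1}(a^p + b^p)$ to turn $d(v,T)^p \le (d(v,u)+d(u,S_u))^p$ into a sum of controlled terms, sum over all $v \in X$ (grouping the $u \notin T$ contributions by their representative $u$ so the $|D(u)|$ factors appear), collect constants to reach $\sum_v d(v,T)^p \le 2^{p+2}\sum_{v \in X} C_v$, and conclude $\big(\sum_v d(v,T)^p\big)^{1/p} \le 4 \cdot 2^{1/p} \cdot (\sum_v C_v)^{1/p} \le \ourbeta \cdot \opt$, using $\sum_v C_v \le \opt^p$ from the LP bound \eqref{lp:ppkc}. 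I expect the main obstacle to be the second bullet: correctly charging $d(u,S_u)^p|D(u)|$ to $\sum_{z} C_z$ with clean constants, in particular handling the interplay between the ``Markov'' half-mass guarantee for $B(u,R(u))$, the diversion of that mass in \Cref{ln:move-rest}, and the while-loop comparison, without double-charging a point $z$ across different representatives $u$ (the partition property of $\{D(u)\}$ should prevent this, but the diverted mass may come from balls $B(u,R(u))$ that are disjoint yet whose associated $C_z$-charges must be kept separate).
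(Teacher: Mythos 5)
Your opening case split (charging $d(v,u)^p \le 2^{p+1}C_v$ for $v \in D(u)$, then handling $u \notin T$ via $S_u$) matches the start of the intended argument, but the step you yourself identify as the key one does not go through, in either of the two ways you propose. The inference from the while loop ending at \Cref{ln:halfint} is invalid: at termination every $y_u$ lies in $\{1/2,1\}$ (\Cref{lma:halfint}), so no vertex satisfies the guard $1/2 < y_u < 1$ and termination tells you nothing of the form ``each $u \notin T$ is dominated by some $w$ with $y_w=1$''; indeed when $|S|=2k$ all $y$-values may equal $1/2$ and no such $w$ exists, and even if one did, $|D(w)|d(w,S_w)^p$ for $w \in T$ is not itself bounded by the LP cost, and several $u$'s could charge to the same $w$. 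Your fallback route is also broken: by \Cref{fact:coreinside} \emph{none} of the mass in $B(u,R(u))$ is diverted away from $u$ in \Cref{ln:move-rest}, so the claim that a constant fraction of that ball's mass lands elsewhere is false, and the assertion that a diverted point $z$ must have $C_z \approx d(u,S_u)^p$ has no basis ($C_z$ is $z$'s LP assignment cost and can be $0$ regardless of $d(z,S)$).

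The charge the paper actually makes is against $C_u$, through the support of $x^*_{u\cdot}$: each $w$ with $x^*_{uw}>0$ has its $y$-mass moved to its nearest representative $\sigma(w)$ with $d(u,\sigma(w)) \le 2d(u,w)$, and at most $y_u$ of that mass lands on $u$ itself, so $(1-y_u)\,d(u,S_u)^p \le 2^p C_u$; hence at \Cref{ln:smudge} the fractional cost $\sum_{u\in S}|D(u)|d(u,S_u)^p(1-y_u)$ is at most $2^p\sum_{u\in S}|D(u)|C_u$, which converts to $2^p\sum_{v\in X}C_v$ because $u\notin T$ forces $R(u)=(2C_u)^{1/p}$ and then $C_u \le C_v$ for all $v\in D(u)$ by the \filter ordering (this is also what rules out the double-charging you worry about). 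Two further ingredients you never use are essential: the second while loop is designed so that this fractional cost does not increase, and the exact half-integrality $1-y_u = 1/2$ for $u\notin T$ is what limits the final rounding loss to a factor $2$, giving $2\cdot 2^p + 2^{p+1} = 2^{p+2}$ against $\sum_v C_v \le \opt^p$ as in the statement. Finally, your combination via $(a+b)^p \le 2^{p-1}(a^p+b^p)$ cannot yield the claimed constant: it gives at best $2^{2p+1}\sum_v C_v$, i.e.\ a factor $2^{2+1/p}$ on the $\ell_p$ objective, which strictly exceeds $\ourbeta = 2^{1+2/p}$ for every $p>1$; your closing sentence, asserting both $2^{p+2}\sum_v C_v$ and a factor $4\cdot 2^{1/p}$, is internally inconsistent, so even granting the charging step the stated bound would not be established for $p>1$.
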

\begin{proof}
For the proof, we compare $\sum_{v \in X} d(v,T)^p$ with the optimal \ref{lp:ppkc} cost $\sum_{v \in X} C_v$ which is at most $\opt^p$. Fix $v \in X$ and $u \in S$ for which $v \in D(u)$. By \Cref{fact:filter} $d(v,u) \leq 2R(v) \leq 2(2C_v)^{1/p}$ per definition of $R$. So moving $D(u)$ to $u$ for all $u \in S$ has an additive cost of $\sum_{u \in S}\sum_{v \in D(u)} d(v,u)^p \leq 2^{p+1} \sum_{v \in X} C_v$. From now on, assume there are $D(u)$ collocated points on a $u \in S$.

Moving around the $y$ mass up to \Cref{ln:move-rest} adds a multiplicative factor of $2^p$ loss in the approximation ratio. The logic is: if $u \in S$ was relying on $w \in X$ in the \ref{lp:ppkc} solution, meaning, $x^*_{uw} > 0$ and $y_w$ was moved to a $u' \in S$ (due to $d(w,u') \leq d(w,u)$) then the cost $u$ has to pay to connect to $u'$ is $d(u,u')^p \leq (d(u,w) + d(w,u'))^p \leq 2^pd(u,w)^p$ which is a $2^p$ factor worse than the \ref{lp:ppkc} cost $d(u,w)^p$ it was paying to connect to $w$ earlier.

At this point (\Cref{ln:smudge}), note that the cost incurred by $u$ is $|D(u)|d(u,S_u)^p(1-y_u)$. To elaborate on this, corresponding to $y$, we define an $x$ such that $(x,y)$ is feasible for \ref{lp:ppkc}. Let $x_{uu} = y_u$ and $x_{uS_u}= 1-y_u$ so that \ref{lp:connect} is satisfied for $u$. Then the cost incurred by u is $|D(u)|(d(u,u)^px_{uu} + d(u,S_u)^px_{uS_u}) = |D(u)|d(u,S_u)^p(1-y_u)$. The while loop ending at \Cref{ln:halfint} does not increase the value of the objective function: As we decrease $y_u$ and increase $y_v$, only if the cost incurred by $u$ is bigger than that of $v$. The last multiplicative factor 2 loss comes from when $u \in S\backslash T$. In which case, $y_u = 1/2$ and $S_u \in T$. So by assigning $u$ to $S_u$ we pay $|D(u)|d(u,S_u)^p$ which is twice more than before (as $1-y_u  =1/2$). Observe that this last step is why we needed to do all the work to get $y_u \in \{1/2,1\}$ in \Cref{lma:halfint}. Putting the three steps together, the overall cost is at most $2\times2^p+2^{p+1} = 2^{p+2}$ times the \ref{lp:ppkc} cost or at most $\ourbeta\opt$.
\end{proof}

\begin{proof}[Proof of \Cref{thm:main}]
Using \Cref{cor:s2k} and the fact that $|T| \leq |S|/2$ by construction, we have $|T| \leq k$. \Cref{lma:kcentapx,lma:kmedapx} give the fairness and approximation guarantees. As for runtime, notice that \Cref{alg:main} runs in time $\Tilde{O}(n^2)$. But the runtime is dominated by the \ref{lp:ppkc} solving time. According to \cite{Young01}, finding a $(1+\varepsilon)$-approximation to \ref{lp:ppkc} takes time $O(kn^2/\epsilon^2)$. Setting $\varepsilon = 1/n$ gives the $O(kn^4)$ runtime.
\end{proof}

As evident, the runtime is dominated by the \ref{lp:ppkc} solving time. We end this section by descibing the sparsification pre-processing that when applied to the original instance, can tremendously decrease the \ref{lp:ppkc} solving time in practice while incurring only a small loss in fairness and clustering cost. One reason why the LP takes a lot of time is because there are many variables; $x_{vu}$ for every $v\in X$ and every $u$ within distance $r(v)$ of $v$. To fix this, we first run the \filter algorithm on the data set with $R(v) = \delta r(v)$, where $\delta$ is a tune-able parameter. It is not too hard to quantify the loss in fairness and cost as a function of $\delta$, and we do so in the lemma below. More importantly, note that when $\delta = 1$, then the number of variables goes down from $n$ to $O(k)$; this is because of the definition of $r(v)$ which guarantees $\approx n/k$ points in the radius $r(v)$ around $v$. Therefore, running the pre-processing step would make the number of remaining points $\approx k$. In our experiments we set $\delta$ to be much smaller, and yet observe a great drop in our running times.

\begin{algorithm}[!ht]
	\caption{Sparsification + \algoname}
	\label{alg:sparsemain}
	\begin{algorithmic}[1]
		\Require Metric $(X,d)$, radius function $r: X \to \Rset^+$, parameter $\delta > 0$
		\State $S, \{D(u) : u \in S\} \leftarrow \filter((X,d),\delta r)$
		\State $(x',y') \leftarrow$ solve \ref{lp:ppkc} only on points $S$ with objective function $\sum_{v,u \in S} d(v,u)^px_{vu}|D(v)|$
		\State $ x^*_{vw} \leftarrow x'_{uw}\forall v,w \in X, u\in S: v\in D(u)$\Comment{$v$'s assignment is identical to its representative $u$}
		\State $y^*_u \leftarrow y'_u$ if $u \in S$ and 0 otherwise
		\Ensure $\algoname((X,d),(1+\delta) r,(x^*,y^*))$
	\end{algorithmic}
\end{algorithm}

\begin{lemma}\label{lma:sparsify}
\Cref{alg:sparsemain} outputs an $(8(1 + \delta), \ourbeta(1 + (\delta\phi)^p)^{1/p})$-approximation where $\phi = (\sum_{v \in X}r(v)^p)^{1/p}/\opt$.
\end{lemma}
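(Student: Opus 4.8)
The plan is to reduce the analysis of \Cref{alg:sparsemain} to an application of \Cref{thm:main} on a \emph{modified} instance, and then carefully track how the two distortions — the change of radius function from $r$ to $(1+\delta)r$, and the restriction of the LP to the representative set $S$ — propagate through the fairness and cost guarantees. Concretely, I would first observe that $(x^*, y^*)$ as constructed in the algorithm is a feasible solution to \ref{lp:ppkc} on the instance $((X,d), (1+\delta)r)$: the covering constraint \eqref{lp:cover} and budget constraint \eqref{lp:budget} are inherited from $(x',y')$ because each $v \in D(u)$ copies its representative's assignment and the $D(u)$'s partition $X$; the open constraint \eqref{lp:open} is likewise inherited; and the crucial distance constraint \eqref{lp:connect} holds because if $x'_{uw} > 0$ then $d(u,w) \le \delta r(u)$ (the LP on $S$ uses radii $\delta r$), so for $v \in D(u)$ we get $d(v,w) \le d(v,u) + d(u,w) \le 2\delta r(v) + \delta r(u) \le \ldots$ — here I would use \Cref{fact:filter}(e) and (d) with $R = \delta r$ to bound $d(v,u) \le 2\delta R(v) = 2\delta r(v)$ and $R(u) \le R(v)$, giving $d(v,w) \le 3\delta r(v)$, which I need to be $\le (1+\delta)r(v)$; this forces $\delta \le 1/2$, or alternatively one argues more tightly using that $w$ is within $\delta r(u)$ of $u$ and $u$ is within $\delta r(v)$... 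I would need to check the exact constant the authors intend, but the shape is: feasibility on radius $(1+\delta)r$ holds.

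Next I would bound the \ref{lp:ppkc} objective of $(x^*,y^*)$ on the modified instance. Since $v \in D(u)$ uses $u$'s assignment, $\sum_w d(v,w)^p x^*_{vw} = \sum_w d(v,w)^p x'_{uw}$, and by triangle inequality $d(v,w) \le d(v,u) + d(u,w)$; summing over all $v$ and grouping by representative, the total is at most something like $\sum_{u \in S} |D(u)| \cdot (\text{stuff involving } d(u,w) \text{ and } d(v,u))$. The $d(u,w)$ part is exactly the objective $\sum_{v,u\in S} d(v,u)^p x'_{vu} |D(v)|$ that the sparsified LP minimized, and that LP's optimum is at most the cost of \emph{any} feasible solution on $S$ — in particular, projecting the true optimum $\opt$ of the original instance onto $S$. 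The $d(v,u) \le 2\delta r(v)$ part contributes at most $2^p \delta^p \sum_v r(v)^p = (2\delta)^p (\phi\,\opt)^p$ after the triangle-inequality expansion (with the usual $(a+b)^p \le 2^{p-1}(a^p+b^p)$ or the cleaner bound the authors use). Combining, the modified-LP optimum is at most roughly $\opt^p(1 + (\delta\phi)^p)$ in the right normalization — this is where the factor $(1 + (\delta\phi)^p)^{1/p}$ in the statement comes from. Then \Cref{thm:main} applied to the modified instance yields a solution $T$ with $|T|\le k$, $d(v,T) \le 8 \cdot (1+\delta) r(v)$ for all $v$ (fairness on the modified radii), and cost $\le \ourbeta$ times the modified optimum $= \ourbeta (1+(\delta\phi)^p)^{1/p}\opt$, which is exactly the claimed bound.

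The main obstacle I anticipate is getting the feasibility of $(x^*, y^*)$ on radius $(1+\delta)r$ exactly right, i.e. verifying that the chain of triangle inequalities from a point $v\in D(u)$ to a facility $w$ with $x'_{uw}>0$ genuinely lands within $(1+\delta)r(v)$ rather than some larger multiple of $r(v)$; this hinges on precisely how the sparsified LP's radius constraint on $S$ reads (it uses $\delta r$, so $d(u,w)\le \delta r(u)$, and $d(u,w)$ could itself be as large as $\delta r(w)$ for $w\in D(u)$, needing \Cref{fact:filter}(d) to say $r(u)\le r(w)$ but here we want the reverse direction) — I would have to be careful about which point's radius governs, and may need to restrict to $\delta \le 1$ or absorb a small slack. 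A secondary subtlety is whether the sparsified instance is even \emph{feasible} with budget $k$: one must argue that restricting to $S$ with the inflated radius still admits $k$ centers, which follows from taking the original optimum and snapping each open facility to the nearest representative, using \Cref{fact:filter} to bound the incurred displacement. Once feasibility and the objective bound are in hand, the rest is a direct invocation of \Cref{thm:main}, so the entire proof is short and its length is dominated by these two verifications.
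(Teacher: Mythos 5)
Your overall route is the paper's route: show the constructed $(x^*,y^*)$ is feasible for \ref{lp:ppkc} with radii dilated to $(1+\delta)r$, bound its cost by $\opt^p\bigl(1+(\delta\phi)^p\bigr)$, and feed it to the rounding analysis. But there is a genuine gap in how you set up both steps. First, you assume the sparsified LP on $S$ uses radii $\delta r$ (``$x'_{uw}>0$ means $d(u,w)\le\delta r(u)$''). That is not the intended (or workable) reading: $\delta r$ is used only by \filter to choose the representatives, while the LP on $S$ keeps the \emph{original} radii, so $x'_{uw}>0$ gives $d(u,w)\le r(u)$. Indeed, with radii $\delta r$ on $S$ the sparsified LP is infeasible whenever $|S|>k$: by \Cref{fact:filter}(a) the balls $B(u,\delta r(u))\cap S$, $u\in S$, are pairwise disjoint and each would need $y$-mass at least $1$, which exceeds the budget $k$ — and $|S|>k$ is exactly the regime the sparsification is designed for (small $\delta$). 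With the correct reading, the feasibility chain is clean and does not force $\delta\le 1/2$: for $v\in D(u)$ and $x^*_{vw}=x'_{uw}>0$, $d(v,w)\le d(v,u)+d(u,w)\le \delta r(v)+r(u)\le(1+\delta)r(v)$, using \Cref{fact:filter}(d)--(e) with $R=\delta r$; note the direction of (d) you need is $r(u)\le r(v)$ for $v\in D(u)$, which is exactly what (d) gives — your worry about ``the reverse direction'' comes from conflating the LP facility $w$ (an arbitrary point of $S$) with a member of $D(u)$. So as written, your feasibility verification does not reach the claimed $(1+\delta)$ dilation, and your version of the intermediate LP does not even exist.

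Second, the cost bound is left at ``roughly'': if you expand $d(v,w)\le d(v,u)+d(u,w)$ with $(a+b)^p\le 2^{p-1}(a^p+b^p)$ you pick up an extra $2^{p-1}$ and do \emph{not} recover the stated factor $(1+(\delta\phi)^p)^{1/p}$; the paper instead charges the relocation additively at the level of $p$-th powers, bounding the \ref{lp:ppkc} cost of $(x^*,y^*)$ by $\opt^p+\sum_{u\in S}\sum_{v\in D(u)}d(v,u)^p\le\opt^p+(\delta\phi\,\opt)^p$, which is where the exact constant comes from. Relatedly, you invoke \Cref{thm:main} against ``the modified-LP optimum,'' but \Cref{alg:sparsemain} never solves the dilated LP: it hands the specific feasible fractional solution $(x^*,y^*)$ to \algoname, and the correct final step is to apply \Cref{lma:kmedapx}, whose guarantee is relative to the cost shares $C_v$ of whatever feasible solution is rounded (so no projection of the true optimum onto $S$, and no separate feasibility argument for the sparsified instance, is needed for the bound as the paper states it). These two fixes — the radii used by the LP on $S$, and the additive $p$-th-power accounting fed into \Cref{lma:kmedapx} rather than a re-solved LP — are exactly what your write-up is missing.
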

\begin{proof}
Observe that $(x^*,y^*)$ is not a feasible \ref{lp:ppkc} solution anymore. Nevertheless, it will be feasible for when $r$ is dilated by a factor of $(1+\delta)$ in the \ref{lp:connect}. Here is how we argue \ref{lp:connect} holds in this case: For any $v,w \in X$ for which $x^*_{vw} > 0$,  recall $u \in S$ is chosen so $v \in D(u)$ and $x^*_{vw} := x'_{uw} > 0$ meaning $d(u,w) \leq r(u)$. We have $d(v,w) \leq d(v,u) + d(u,w) \leq \delta r(v) + r(u) \leq (1+\delta)r(v)$. The last two inequalities are by \Cref{fact:filter} as $d(v,u) \leq \delta r(v)$ and $r(u) \leq r(v)$. As for the \ref{lp:ppkc} cost of $(x^*,y^*)$, it is not longer upper-bounded by $\opt^p$ but rather by $\opt^p + \sum_{u \in S}\sum_{v\in D(u)}d(v,u)^p$. This additive term is at most $\sum_{v \in X}(\delta r(v))^p \leq (\delta\phi\opt)^p$. Plugging this into \Cref{lma:kmedapx} finishes the proof.
\end{proof}

\section{Experiments}\label{sec:experiments}

\paragraph{Summary.}
We run experiments to show that the empirical performance of our algorithms can be much superior to the worst-case guarantees claimed by our theorems (the codes are publicly availably on Github\footnote{\url{https://github.com/moonin12/individually-fair-k-clustering}}). While implementing our algorithm, we make the following optimization :
instead of choosing the constant ``$2$'' in \Cref{line:ourR} (definition of $R$) in our algorithm, we perform a binary-search to determine a better constant. Specifically, we find the smallest $\beta$ for which \Cref{alg:filter} gives at most $k$ centers with $R(v) := \min\{r(v), (\beta C_v)^{1/p}\}$ for all $v$. This step is motivated by the experiments in~\cite{JKL20}.

We assess the performance with respect to (a) fairness, (b) objective value, and (c) running times. Due to space restrictions, we focus on the \kmeans objective and leave \kmed results to \Cref{sec:kmedresults}. Our fairness violation seldom exceeds 1.3 (compare this to the theoretical bound of $8$), and often is much better than that of~\cite{MV20} and~\cite{JKL20}. The objective value of our solution is in fact extremely close to the LP solution, which is a {\em lower bound} on the optimum cost (compare this to the theoretical bound of $4$). This occurs because typically the LP solution itself has many integer entries. Finally, although the ``vanilla'' running time of the LP is pretty large, our sparsification routine
\Cref{lma:sparsify} tremendously reduces the running time.

%TODO include Jung et al datasets
\paragraph{Datasets.} Similar to \cite{MV20}, we use  3 datasets from the UCI repository\footnote{\href{https://archive.ics.uci.edu/ml/datasets/}{https://archive.ics.uci.edu/ml/datasets/}}~\cite{Dua:2019} (also used in previous fair clustsering works of~\cite{CKLV17,BCFN19}). We use a set of numeric features (matching previous work) to represent data in Euclidean metric.\\
 \textbf{1- }\bank~\cite{bank-paper} with 4,521 points, on data collected by the marketing campaign of a Portuguese banking institution. We use 3 features ``age'', ``balance'', and ``duration''.\\
\textbf{2- }\census~\cite{census-paper} with 32,561 points, based on the 1994 US census. Here we use the following 5 features: ``age'' , ``final-weight'', ``education-num'', ``capital-gain'', and ``hours-per-week''.\\
\textbf{3- }\diabetes~\cite{strack2014impact} 
with 101,766 points, from diabetes patient records from 130 hospitals across US.
% \textbf{4- }\censusbig~\cite{meek2002learningcurve} with 2,458,285 points, taken from the 1990 US census with ?? features.

\paragraph{Benchmarks.} We compare with the algorithms of \cite{MV20} and \cite{JKL20}. For~\cite{MV20}, we set parameters according to the experimental section of their paper. We remark that in their experimental section they perform $1$ swap in their local search instead of the $4$-swaps for which they prove their theorem.
% Initially, they run \Cref{alg:filter} but changing the 2 in \Cref{ln:chld} to 3 (as opposed to 6 in their theoretical results), followed by a local search procedure (a modification of~\cite{Arya04,KMN02}) with a single swap. Their theoretical guarantees heavily rely on running this local search with 4 simultaneous swaps, with runtime $\omega(n^4)$ which is infeasible in practice. We also compare with~\cite{JKL20}. 
 For our running time comparison, we also compare ours and the above two algorithm's running times with the 
 popular \kmeanspp algorithm of~\cite{AV07} from scikit learn toolkit~\cite{sklearn} which has no fairness guarantees.

\subsection{Fairness analysis}\label{subsec:fairness}
For any solution $T\subseteq X$, define the ``violation array'' $\theta_T$ over $v \in X$ as $\theta_T(v) = d(v,T)/r(v)$. This provides a more fine-grained view of the per-client fairness guarantee violation.
Similar to \cite{MV20} we use maximum violation (i.e. $\max_{v \in X}\theta_T(v)$) as a benchmark for fairness. In \Cref{fig:max_violation_main} we have plotted the maximum violation of our algorithm \algoname, \cite{MV20} (noted as MV), and \cite{JKL20} (noted as JKL). We find that our results are noticeably fairer than \cite{MV20} while improving over \cite{JKL20} in many cases.%\footnote{See discussion at the end of \Cref{sec:prelim} and an experimental demonstration in our experimental cost analysis of how this adversely affects their clustering cost.} 

\begin{figure}[!ht]
\includegraphics[height=1.2in]{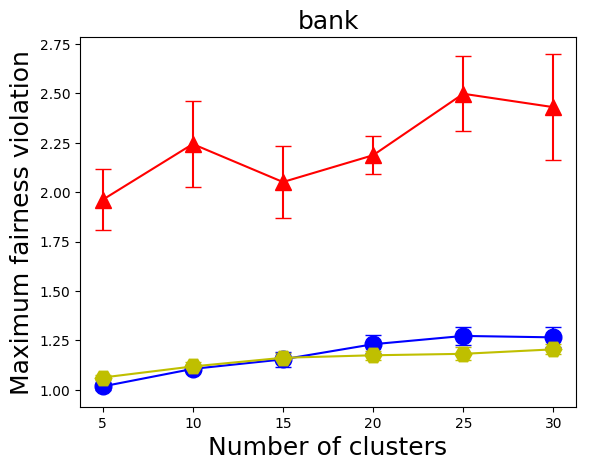}
\includegraphics[height=1.2in]{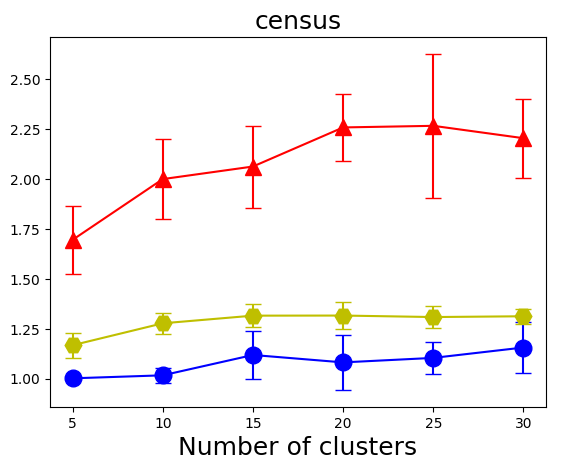}
\includegraphics[height=1.2in]{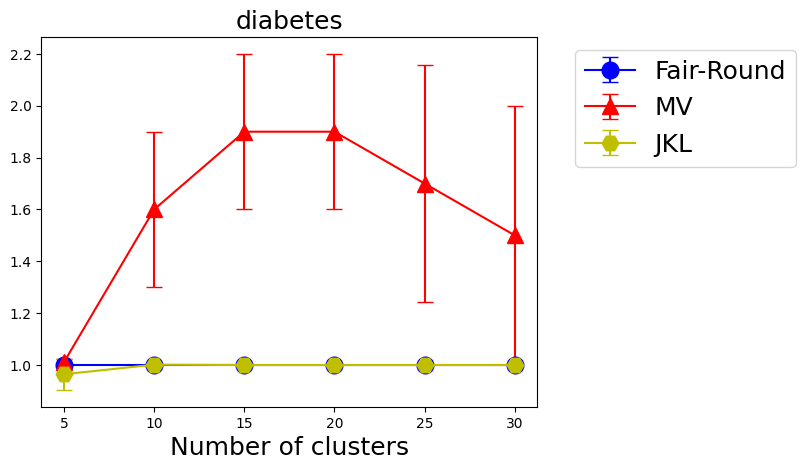}
\centering
\caption{Comparison of maximum fairness violation with \kmeans objective, between our algorithm \algoname, the algorithm in \cite{JKL20} (denoted as JKL), and the algorithm in \cite{MV20} (denoted as MV), on average of 10 random samples of size 1000 each.}
\label{fig:max_violation_main}
\end{figure}

Next, we compare the histograms of the violation vectors $\theta$ as described above. This picture provides a better view of the fairness profile, as just looking at the maximum violation may not be a robust measure;  an algorithm that is largely unfair may be preferred to an algorithm that is extremely unfair on a single point but extremely fair on all the rest. We show the histograms of the violation vectors of these algorithms in \Cref{fig:violation_hist_main} (for a full set of histograms see \Cref{subsec:means_histograms}). As an example, even though in \bank with $k=20$ our maximum violation is slightly worse than \cite{JKL20} our histogram shows that we are slightly fairer overall. We observe that our algorithm ensures complete fairness for at least \%80 of the points in almost all the experiments. A theoretical study of the violation vector is interesting and left as future work.
\begin{figure}[!ht]
\includegraphics[height=1.2in]{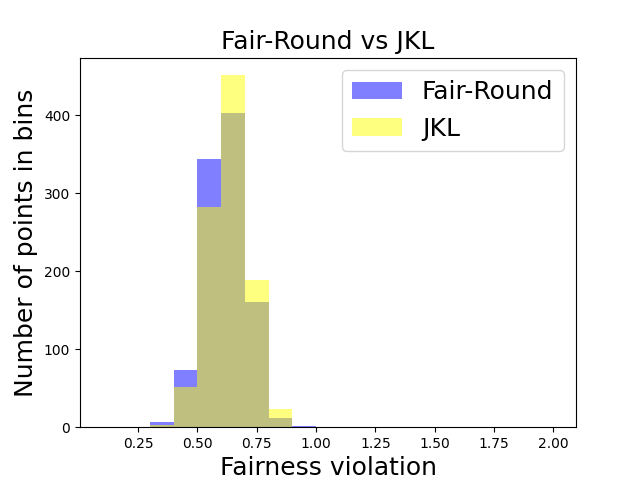}
\includegraphics[height=1.2in]{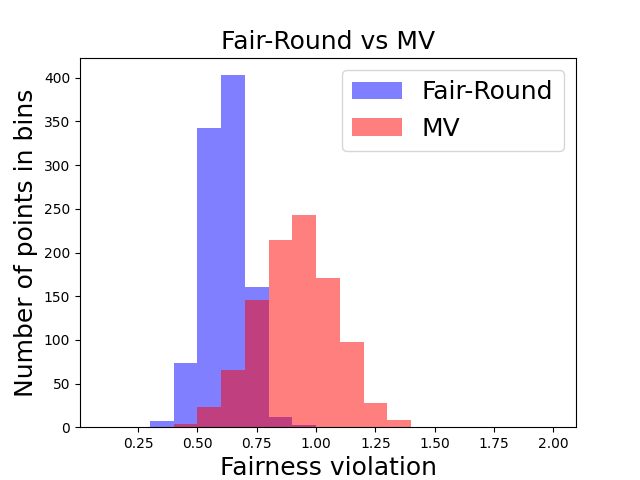}
\centering
\caption{Histograms of violation vectors on \bank with $k=20$ \kmeans objective comparing \algoname with the algorithm in \cite{JKL20} (denoted as JKL) and the algorithm in \cite{MV20} (denoted as MV), on average of 10 random samples of size 1000 each.}
\label{fig:violation_hist_main}
\end{figure}

\subsection{Cost analysis}\label{subsec:cost}
In \Cref{fig:cost-main} we demonstrate that in our experiments, the objective value almost matches the optimal cost, begin at most \%1 more almost always, and never above \%15 more than the LP cost.
%As mentioned earlier, \cite{JKL20} cost is much higher than any of the other benchmarks. 
Note that our cost is also {\em higher} than that of \cite{MV20}, and the reader may be wondering how the latter can be better than the optimum cost. The reason is that the \cite{MV20} cost is violating the fairness constraint while the optimal cost is not. To do a more apples-to-apples comparison, one can also allow the same violations as~\cite{MV20} and re-solve the linear program and also our rounding algorithms. On doing so, we do find that our algorithm's cost becomes lower than that of~\cite{MV20}.
Details of this can be found in~\Cref{subsec:rel_plots}.
The set-up also allows us to measure the \emph{cost of fairness}: how much does the linear programming cost and our algorithm's cost decrease as we relax the fairness constraints. We do this empirical study; see \Cref{subsec:cost_of_fairness} for our results.

% we argue that this is unavoidable for ensuring our level of fairness for two reasons: One, the optimal solution itself closely matches ours and two, allowing the same violations as \cite{MV20} drops our cost to the same level as \cite{MV20}. To be precise, let $T$ be the \cite{MV20} solution and define $r'$, the \emph{relaxed} fairness radii, as $r'(v) := \max\{d(v,T), r(v)\}$. Then solve the LP on the relaxed radii and we get a lower-bound on the cost of any $r'$-fair solution. Furthermore, we use this LP-solution to run \algoname (call it \relalgoname) and observe that its radii violations and cost almost match~\cite{MV20} (see \Cref{subsec:rel_plots}).

\begin{figure}[!ht]
\includegraphics[height=1.2in]{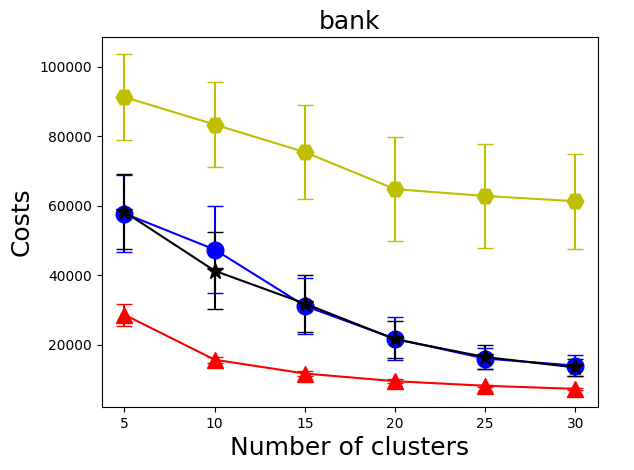}
\includegraphics[height=1.2in]{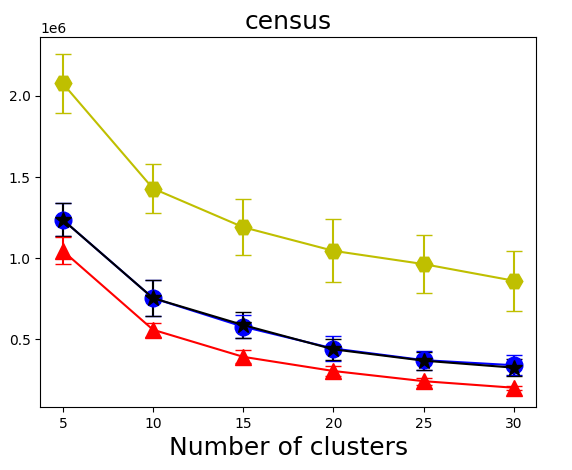}
\includegraphics[height=1.2in]{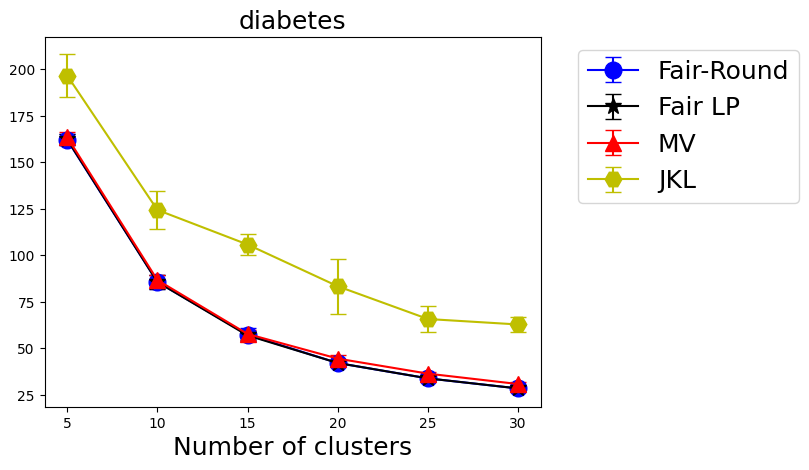}
\centering
\caption{Comparison of \kmeans clustering cost, between our algorithm \algoname, the algorithm in \cite{JKL20} (denoted as JKL), and the algorithm in \cite{MV20} (denoted as MV), on average of 10 random samples of size 1000 each.  We also plot the LP cost which is a lower bound on the optimum cost (denoted as {\sf Fair-LP}.}
\label{fig:cost-main}
\end{figure}

%\Cref{fig:cost_of_fairness_main} shows how the fair LP cost approaches the \kmeans cost as fairness radii is relaxed. % Then run experiments to demonstrate ``cost of fairness'' that only includes fair LP, vanilla LP, and vanilla clustering

\paragraph{Runtime analysis.} We run our experiments in Python 3.8.2 on a MacBook Pro with 2.3 GHz 8-Core Intel Corei9 processor and 32 GB of DDR4 memory. 
 We solve our linear programs  using the Python API for CPLEX~\cite{CPLEX}.
We demonstrate that even though solving an LP on the entire instance is time consuming, our sparsification step tremendously improves on the runtime (\Cref{fig:time_main}) while increasing the clustering cost or fairness performance by a only a negligible margin (see \Cref{subsec:spa_plots} for the cost and fairness results). Below, the blue line with circles shows \algoname with ``vanilla'' LP solver, and the green line with upside-down triangles \spaalgoname is the runtimes with the sparse LP solution.

\begin{figure}[!ht]
\includegraphics[height=1.2in]{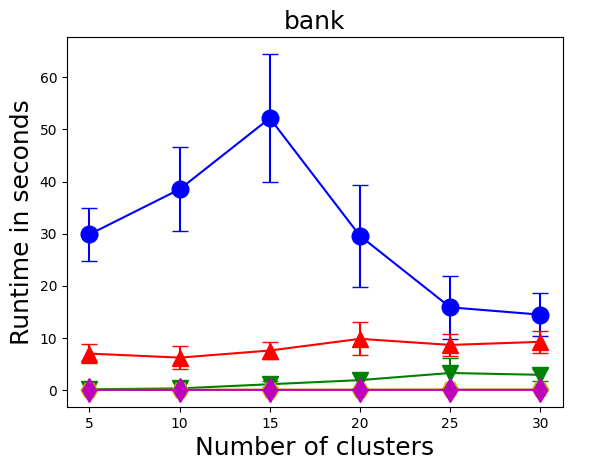}
\includegraphics[height=1.2in]{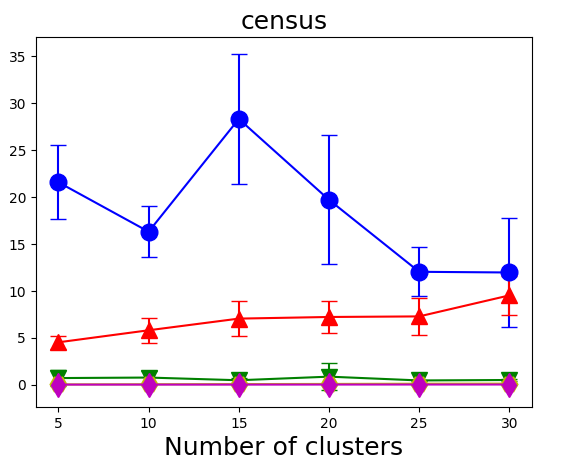}
\includegraphics[height=1.2in]{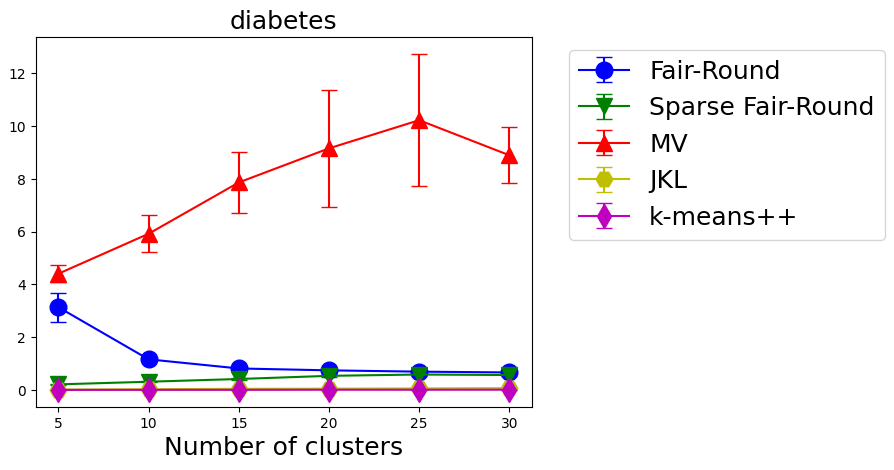}
\centering
\caption{Comparison runtime with \kmeans objective, between our algorithm \algoname, the algorithm in \cite{JKL20} (denoted as JKL), the algorithm in \cite{MV20} (denoted as MV), and \kmeanspp on average of 10 random samples of size 1000 each. Here, $\delta = 0.3$, $0.05$ and $0.01$ for \bank, \census, and \diabetes. Note that \cite{JKL20}, \kmeanspp, and our sparsified algorithm \spaalgoname have very small differences in their running times.}
\label{fig:time_main}
\end{figure}

\bibliographystyle{plain}
\bibliography{references}{}
% \input{checklist}
% \pagebreak
\appendix
\section{Complementary results for \kmeans}\label{appendix_kmeans}

In this section we provide experiments to further elaborate on our results from \Cref{sec:experiments}.

\subsection{Fairness histograms}\label{subsec:means_histograms}
Recall from \Cref{subsec:fairness} that the violation array for any solution $T\subseteq X$ is defined as $\theta_T$ over $v \in X$ as $\theta_T(v) = d(v,T)/r(v)$. Here we have the complete set of violation histograms similar to \Cref{fig:violation_hist_main}. As evident, our algorithm is significantly fairer than \cite{MV20} and closely matching \cite{JKL20}.

\begin{figure}[!ht]
\includegraphics[height=1.2in]{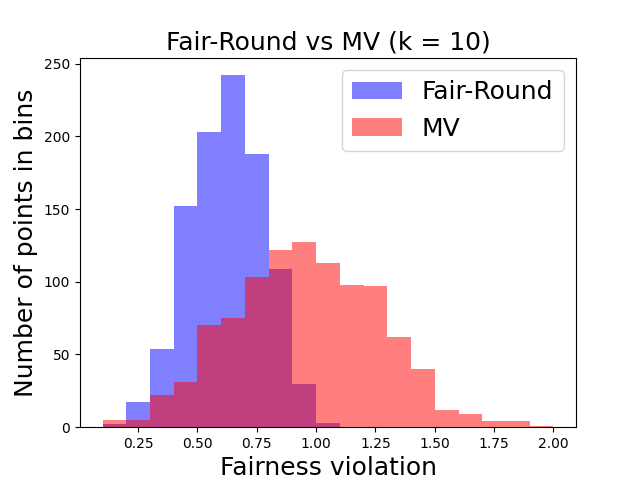}
\includegraphics[height=1.2in]{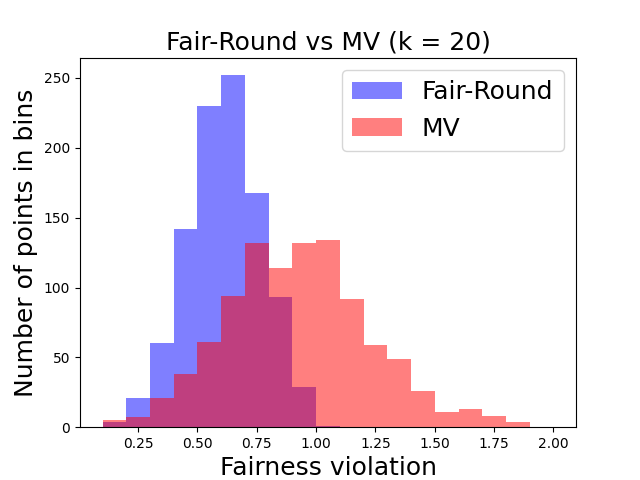}
\includegraphics[height=1.2in]{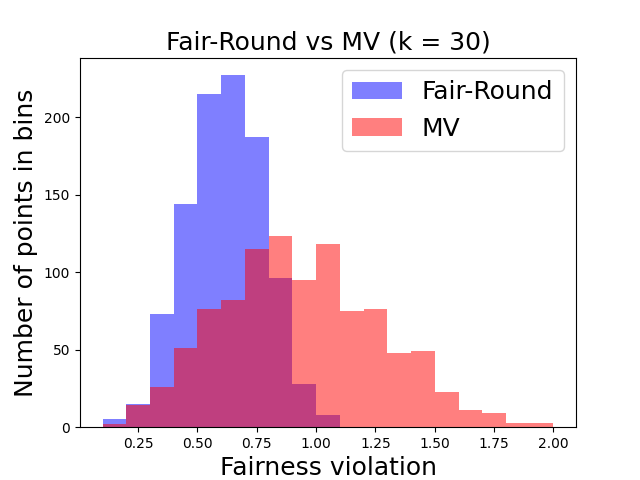}

\includegraphics[height=1.2in]{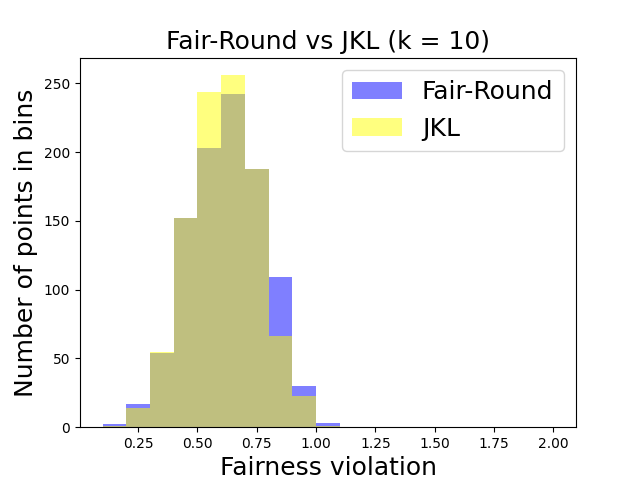}
\includegraphics[height=1.2in]{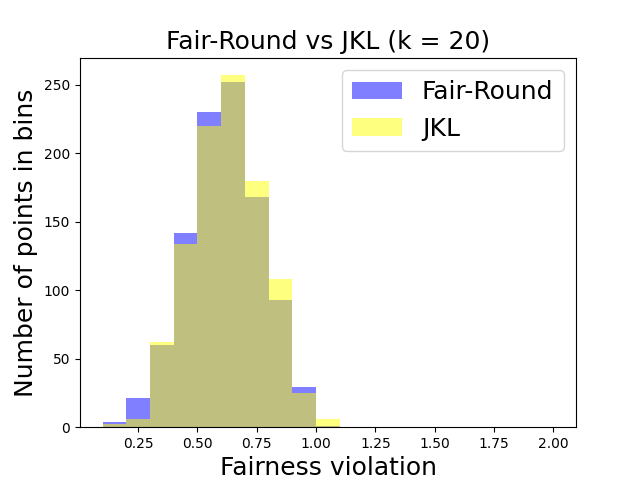}
\includegraphics[height=1.2in]{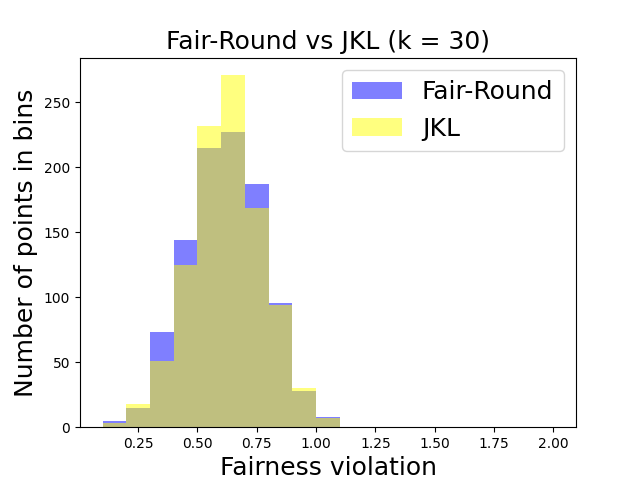}

\centering
\caption{Histograms of violation vectors on \bank with \kmeans objective comparing \algoname with the algorithm in \cite{JKL20} (denoted as JKL) and the algorithm in \cite{MV20} (denoted as MV), on average of 10 random samples of size 1000 each.}
\label{fig:violation_hist_appendix_bank}
\end{figure}

\begin{figure}[!ht]
\includegraphics[height=1.2in]{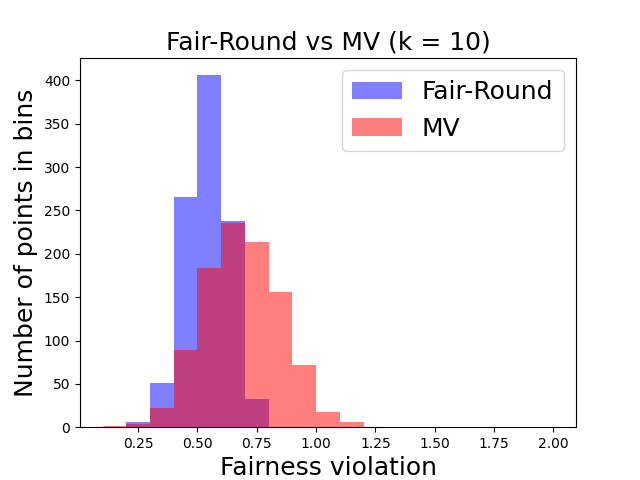}
\includegraphics[height=1.2in]{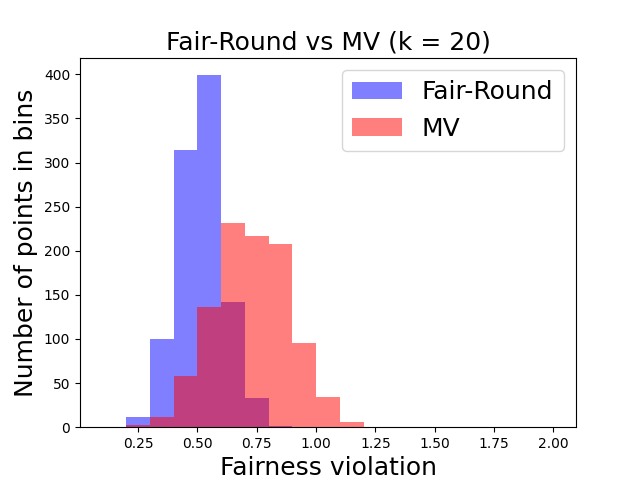}
\includegraphics[height=1.2in]{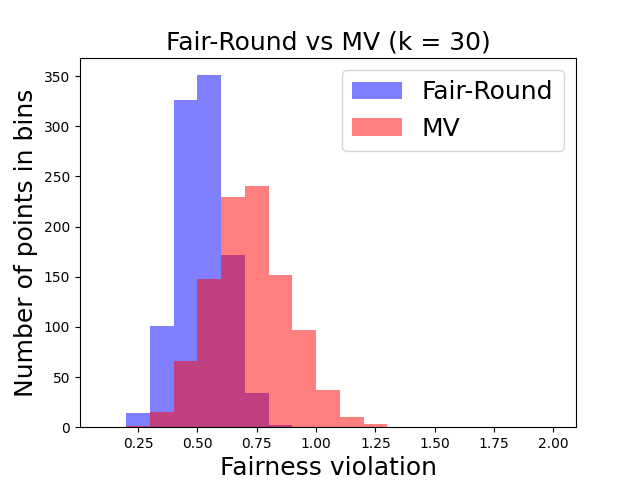}

\includegraphics[height=1.2in]{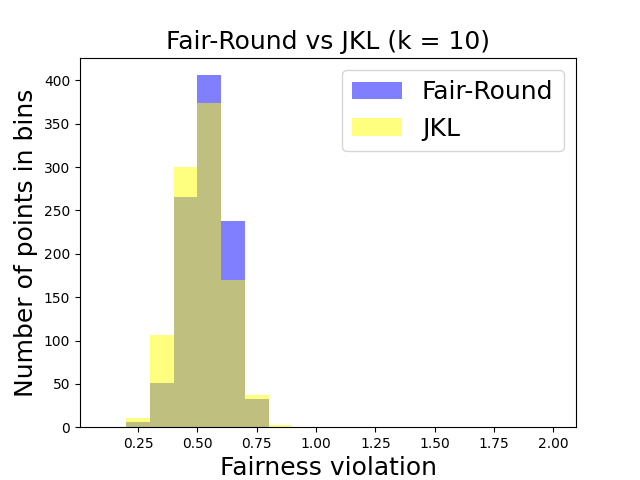}
\includegraphics[height=1.2in]{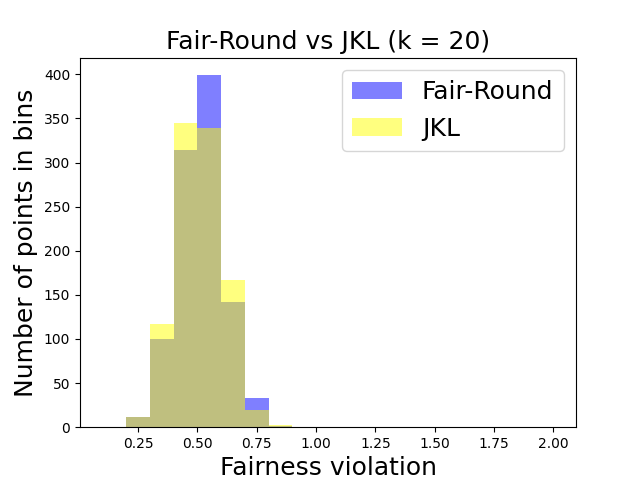}
\includegraphics[height=1.2in]{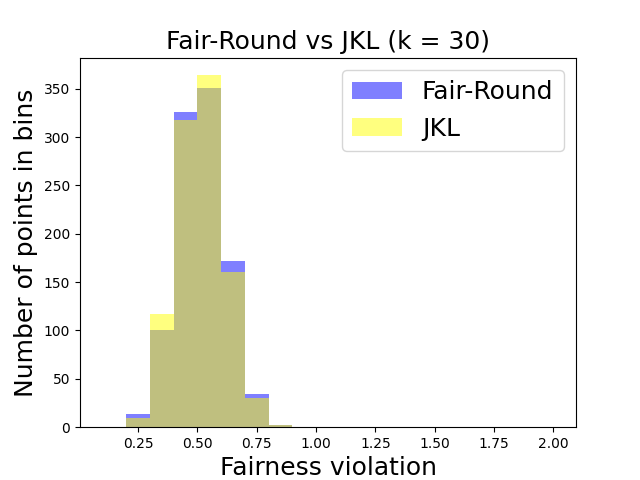}

\centering
\caption{Histograms of violation vectors on \census with \kmeans objective comparing \algoname with the algorithm in \cite{JKL20} (denoted as JKL) and the algorithm in \cite{MV20} (denoted as MV), on average of 10 random samples of size 1000 each.}
\label{fig:violation_hist_appendix_census}
\end{figure}

\begin{figure}[!ht]
\includegraphics[height=1.2in]{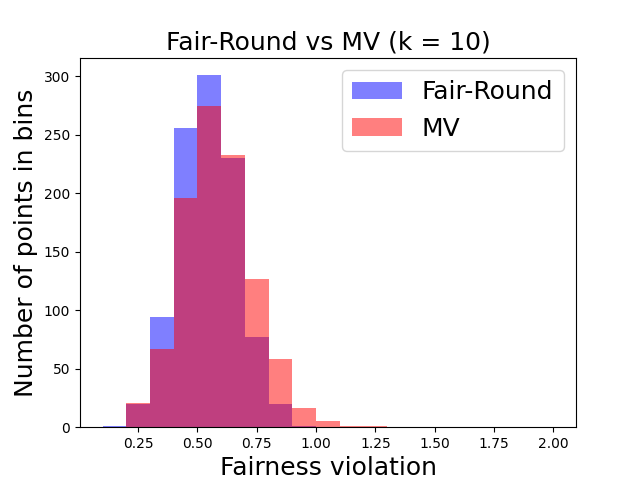}
\includegraphics[height=1.2in]{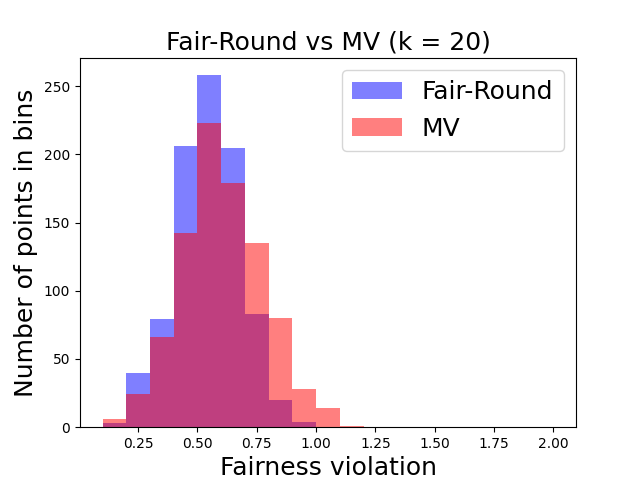}
\includegraphics[height=1.2in]{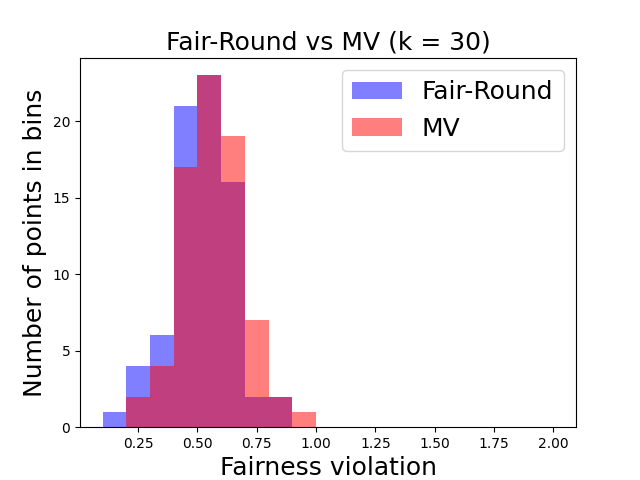}

\includegraphics[height=1.2in]{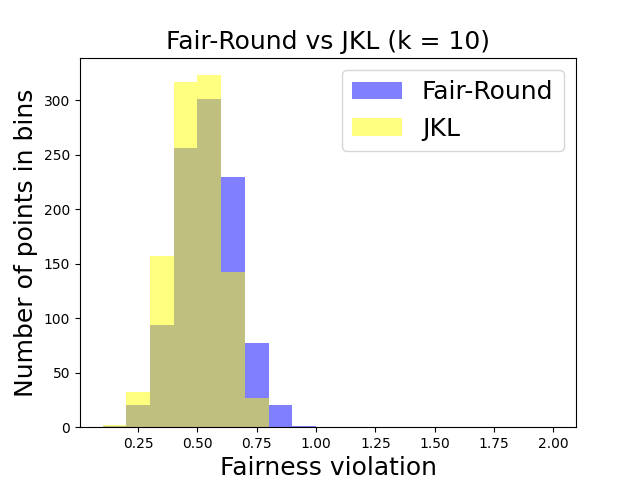}
\includegraphics[height=1.2in]{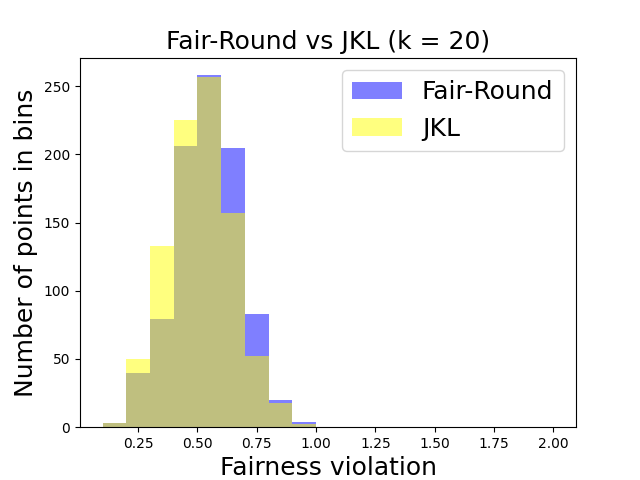}
\includegraphics[height=1.2in]{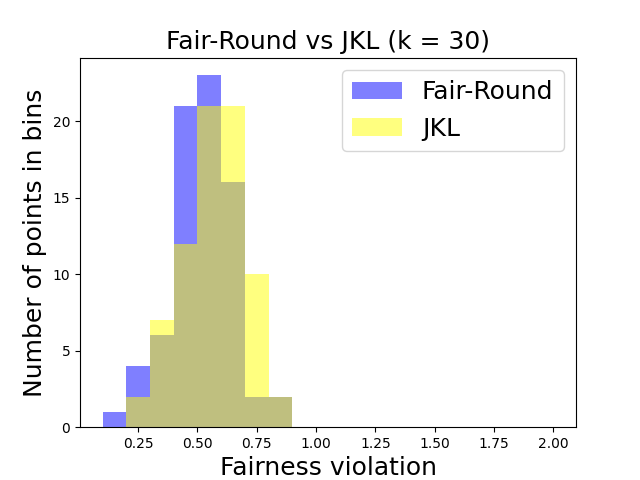}

\centering
\caption{Histograms of violation vectors on \diabetes with \kmeans objective comparing \algoname with the algorithm in \cite{JKL20} (denoted as JKL) and the algorithm in \cite{MV20} (denoted as MV), on average of 10 random samples of size 1000 each.}
\label{fig:violation_hist_appendix_diabetes}
\end{figure}

\subsection{\relalgoname plots}\label{subsec:rel_plots}
As mentioned in \Cref{fig:violation_hist_main}, allowing the same fairness radii violation as \cite{MV20} makes our algorithm to give better cost and slightly better fairness violation than \cite{MV20}. This is depicted in \label{fig:relaxed_cost_appendix} and \label{fig:relaxed_fairness_appendix} respectively.

\begin{figure}[!ht]
\includegraphics[height=1.2in]{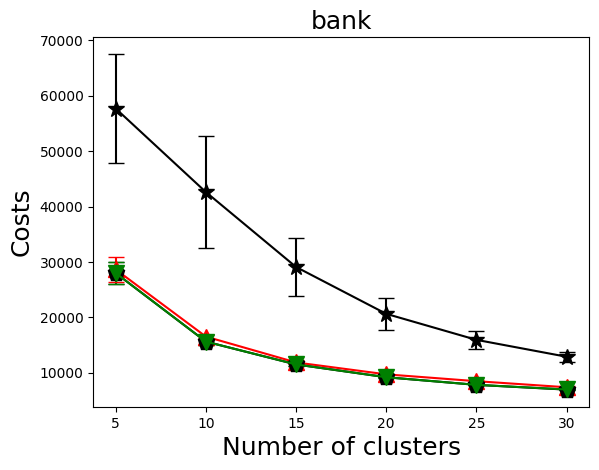}
\includegraphics[height=1.2in]{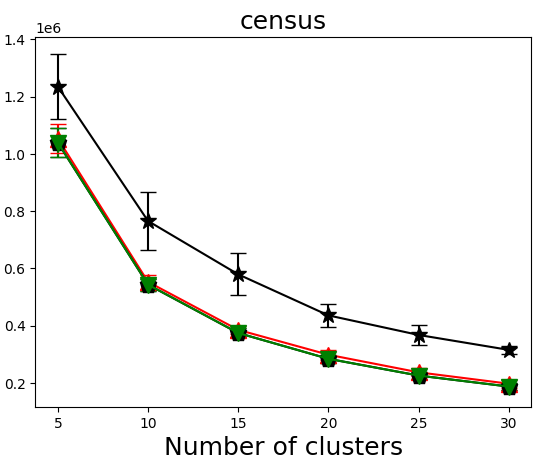}
\includegraphics[height=1.2in]{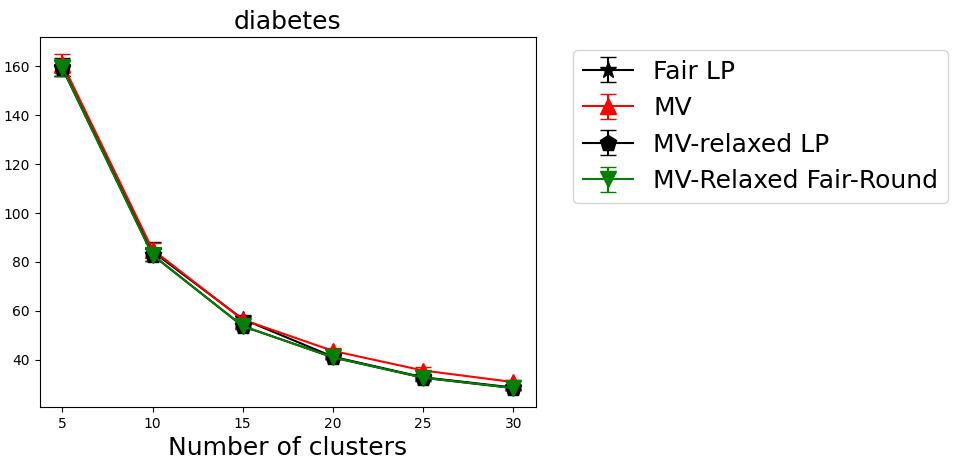}
\centering
\caption{Comparison of \kmeans objective cost, between \cite{MV20} (denoted as MV), and our algorithm with radii relaxed as MV, \relalgoname, on average of 10 random samples of size 1000 each.}
\label{fig:relaxed_cost_appendix}
\end{figure}

\begin{figure}[!ht]
\includegraphics[height=1.2in]{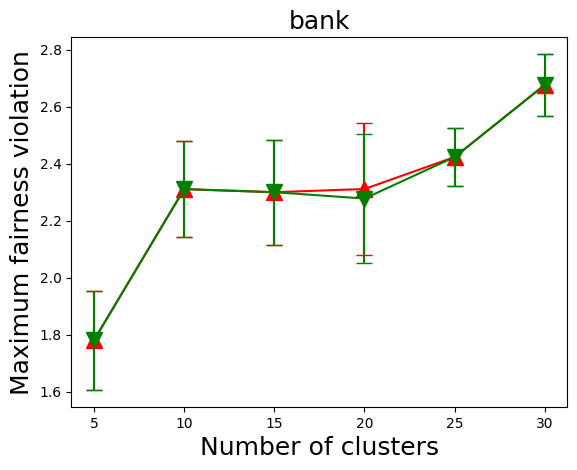}
\includegraphics[height=1.2in]{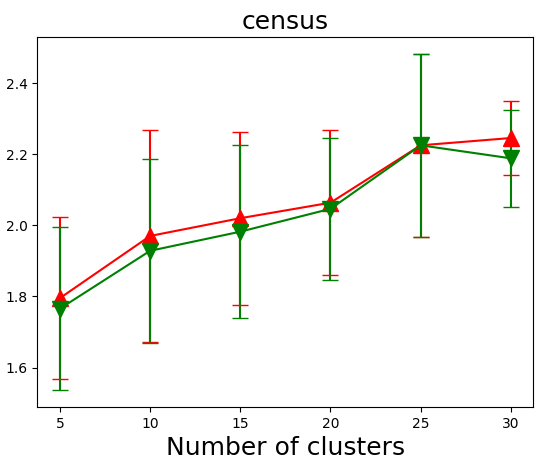}
\includegraphics[height=1.2in]{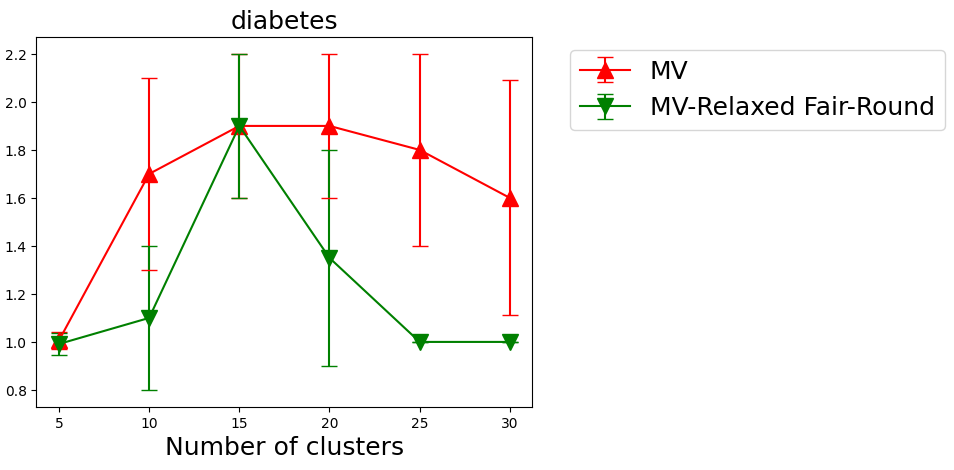}
\centering
\caption{Comparison of maximum fairness violation with \kmeans objective, between \cite{MV20} (denoted as MV), fairLP cost, our algorithm with radii relaxed as MV, called \relalgoname, with the corresponding LP solution on average of 10 random samples of size 1000 each. Note, \relalgoname, MV, and the LP cost with relaxed radii match very closely.}
\label{fig:relaxed_fairness_appendix}
\end{figure}

\subsection{Cost of fairness}\label{subsec:cost_of_fairness}
In this section, we demonstrate how the LP cost changes as we allow the points to violate the fairness radius by a varying constant factor. As previously mentioned in \Cref{subsec:cost}, the LP cost is used as a proxy for \opt. The plots show what is called ``the cost of fairness'' for $k = 20$ across all the datasets. As we relax fairness constraints, the LP costs drops but the slope varies across datasets. The trend seems to be that in datasets where cost of fairness is not much affected by violation, the gap between our cost and \cite{MV20} also seems to be lower (in \Cref{fig:cost-main}) as expected.

\begin{figure}[!ht]
\includegraphics[height=1.4in]{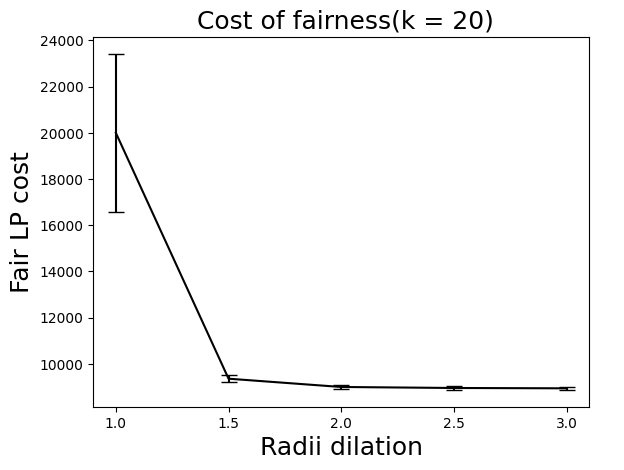}
\includegraphics[height=1.4in]{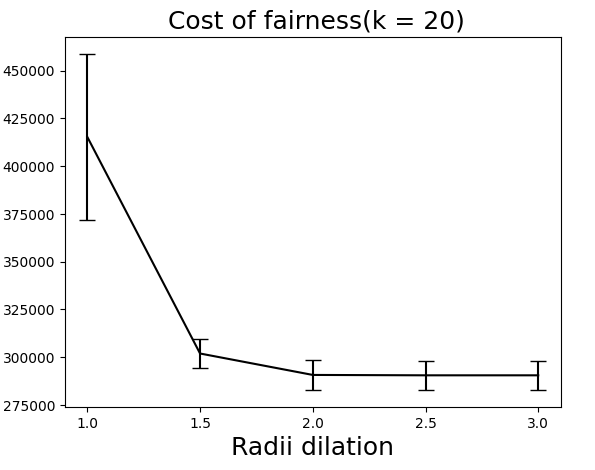}
\includegraphics[height=1.4in]{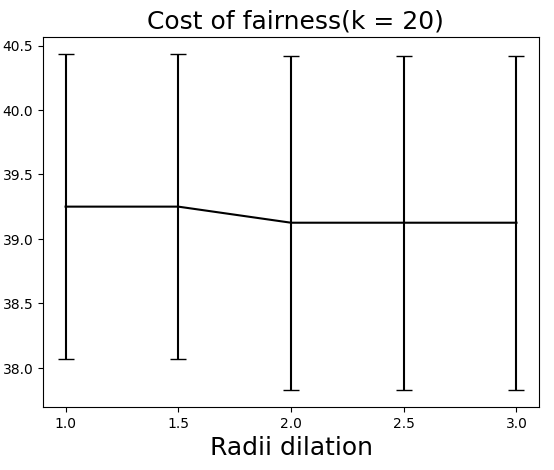}
\centering
\caption{Comparison of LP cost for \kmeans objective with varying constants dilation of radii, on average of 10 random samples of size 1000 each.}
\label{fig:cost_of_fairness}
\end{figure}

\subsection{\spaalgoname plots}\label{subsec:spa_plots}
As demonstrated in \Cref{fig:time_main}, using \Cref{lma:sparsify} with $\delta = 0.3$, $0.05$ and $0.01$ for \bank, \census, and \diabetes considerably decreases the LP solving time, hence, the overall runtime of our algorithm. Here we show that fairness and clustering cost are only slightly affected in \Cref{fig:sparse_fairness_appendix} and \Cref{fig:sparse_cost_appendix} respectively.

\begin{figure}[!ht]
\includegraphics[height=1.2in]{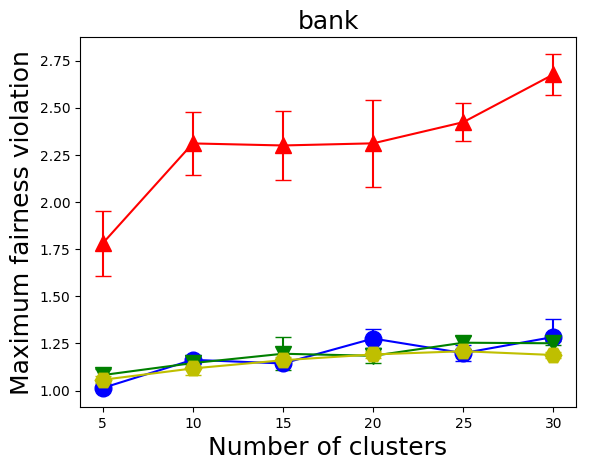}
\includegraphics[height=1.2in]{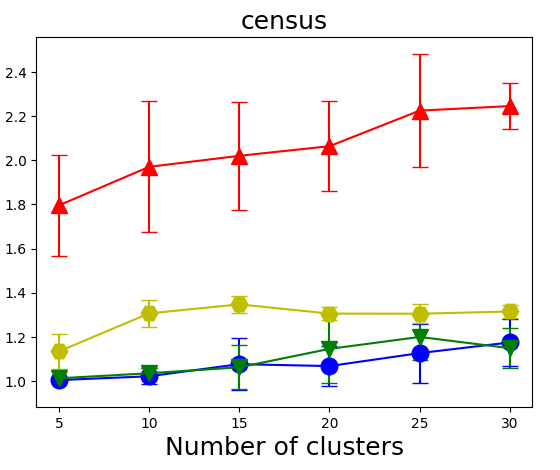}
\includegraphics[height=1.2in]{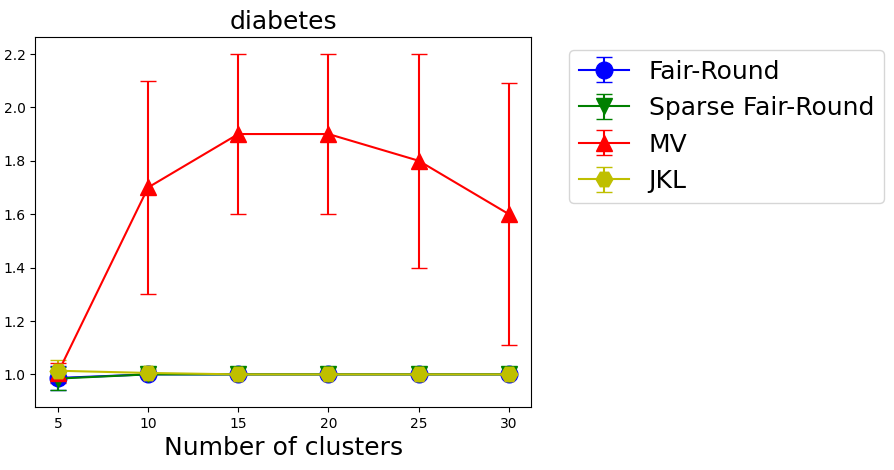}
\centering
\caption{Comparison of maximum fairness violation with \kmeans objective, between our algorithm before and after sparsification, the algorithm in \cite{JKL20} (denoted as JKL), and the algorithm in \cite{MV20} (denoted as MV), on average of 10 random samples of size 1000 each.}
\label{fig:sparse_fairness_appendix}
\end{figure}

\begin{figure}[!ht]
\includegraphics[height=1.2in]{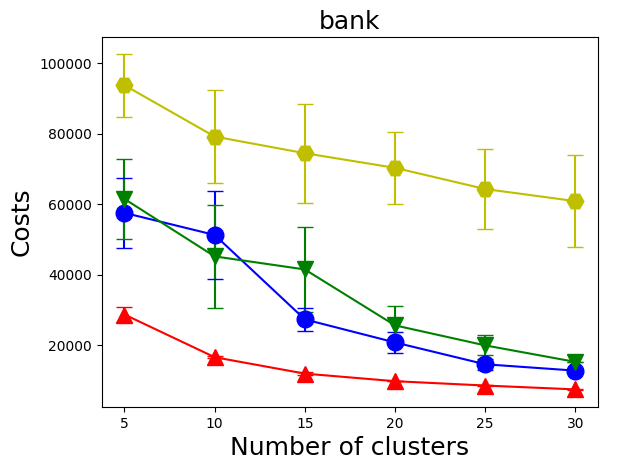}
\includegraphics[height=1.2in]{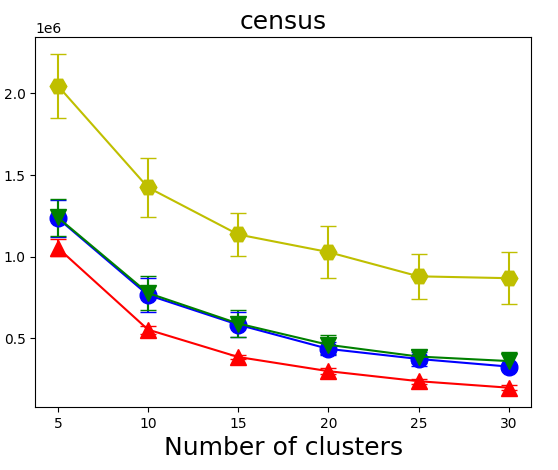}
\includegraphics[height=1.2in]{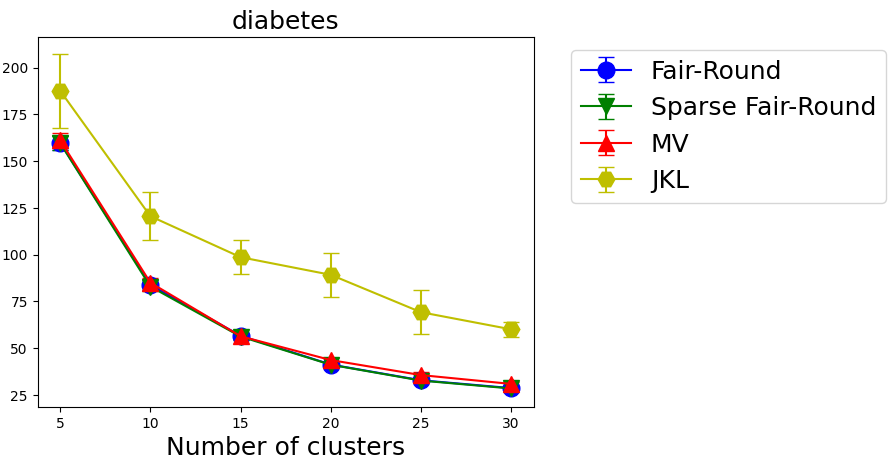}
\centering
\caption{Comparison of \kmeans clustering cost, between our algorithm before and after sparsification, the algorithm in \cite{JKL20} (denoted as JKL), and the algorithm in \cite{MV20} (denoted as MV), on average of 10 random samples of size 1000 each.}
\label{fig:sparse_cost_appendix}
\end{figure}

\section{Results for \kmed}\label{sec:kmedresults}
We repeat the experiments from \Cref{sec:experiments} for \kmed objective and observe the same trends. We start off by plotting maximum fairness violations analogous to \Cref{fig:max_violation_main}.

\begin{figure}[!ht]
\includegraphics[height=1.2in]{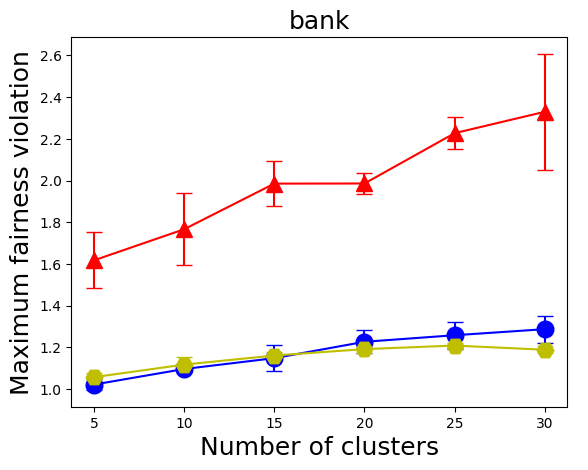}
\includegraphics[height=1.2in]{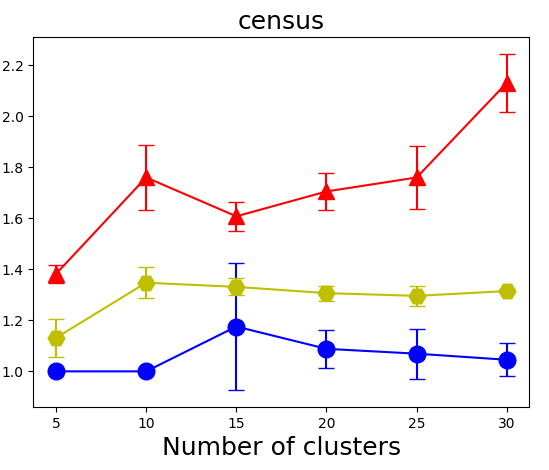}
\includegraphics[height=1.2in]{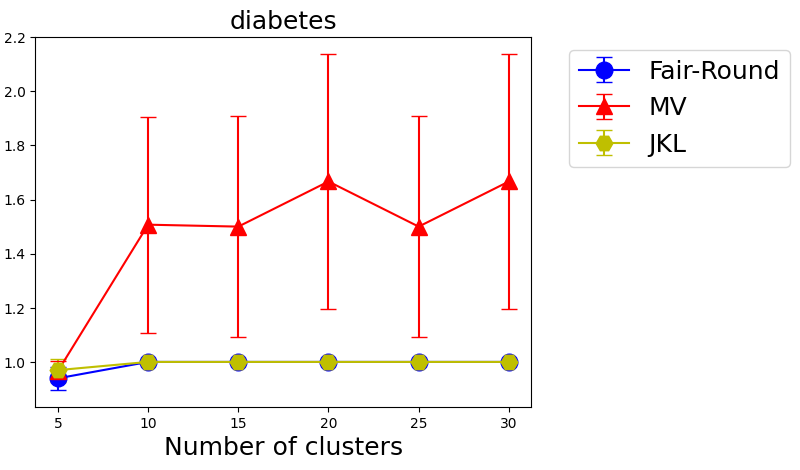}
\centering
\caption{Comparison of maximum fairness violation with \kmed objective, between our algorithm \algoname, the algorithm in \cite{JKL20} (denoted as JKL), and the algorithm in \cite{MV20} (denoted as MV), on average of 10 random samples of size 1000 each.}
\label{fig:max_violation_median}
\end{figure}

Next, we compare the cost of \algoname with the algorithm in \cite{JKL20} (denoted as JKL) and the algorithm in \cite{MV20} (denoted as MV). The results are similar to \Cref{fig:cost-main} for \kmeans.

\begin{figure}[H]
\includegraphics[height=1.2in]{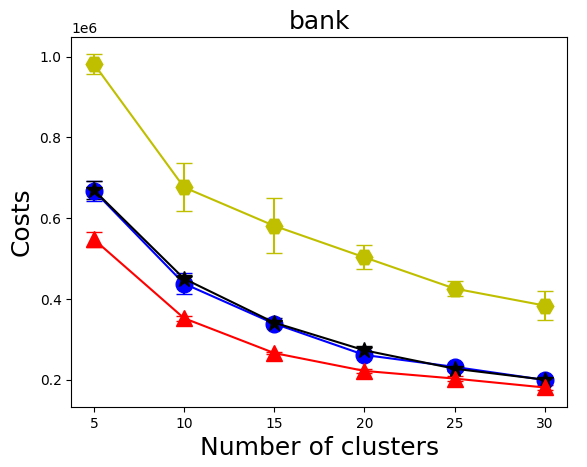}
\includegraphics[height=1.2in]{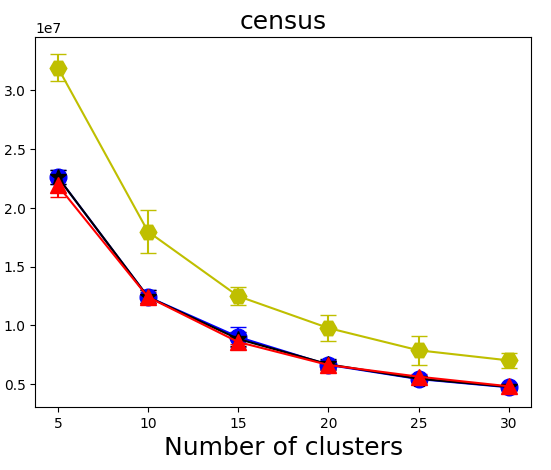}
\includegraphics[height=1.2in]{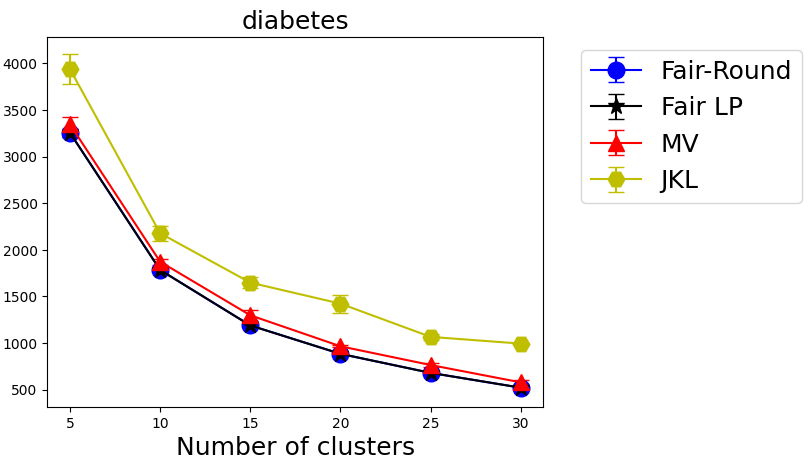}
\centering
\caption{Comparison of \kmed clustering cost, between our algorithm \algoname, JKL, MV , on average of 10 random samples of size 1000 each.  We also plot the LP cost which is a lower bound on the optimum cost (denoted as {\sf Fair-LP}).}
\label{fig:cost-median}
\end{figure}

As for the runtime of our algorithm after sparsification, we run the same analysis as in \Cref{fig:time_main} but with the \kmed objective.

\begin{figure}[H]
\includegraphics[height=1.2in]{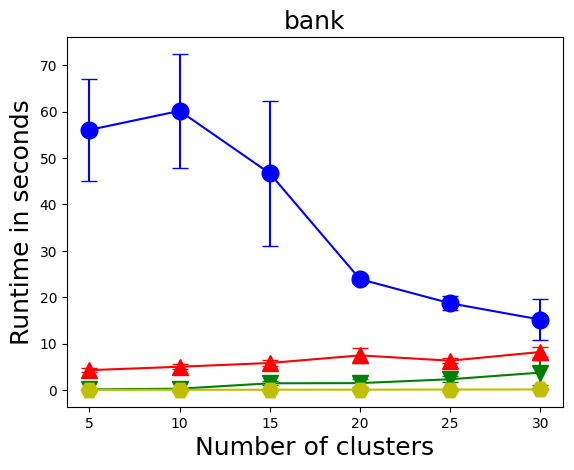}
\includegraphics[height=1.2in]{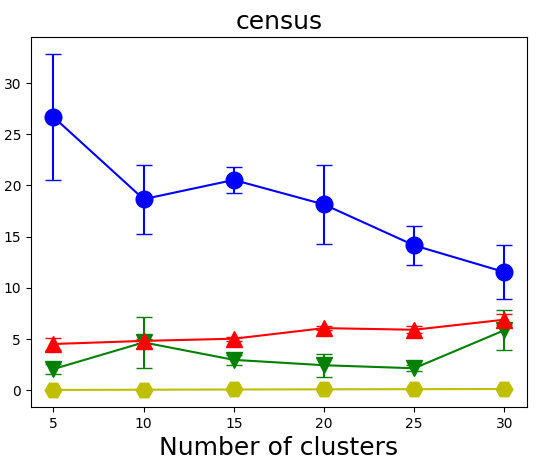}
\includegraphics[height=1.2in]{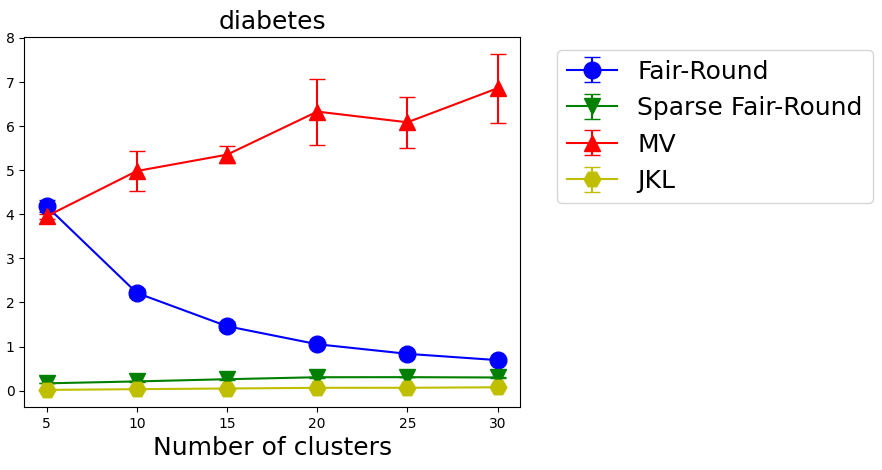}
\centering
\caption{Comparison runtime with \kmeans objective, between our algorithm \algoname before and after sparsification, the algorithm in \cite{JKL20} (denoted as JKL) and the algorithm in \cite{MV20} (denoted as MV), on average of 10 random samples of size 1000 each. Here, $\delta = 0.3$, $0.05$ and $0.01$ for \bank, \census, and \diabetes.}
\label{fig:time_median}
\end{figure}
\end{document}